\def\R{\mathbb{R}}
\def\Id{\mathbb{1}}
\def\ra{\rho_\alpha}
\def\wls{w_\mathrm{ls}}
\def\wvar{w_\mathrm{var}}
\newcommand\psia{\ket{\psi_\alpha}}
\DeclarePairedDelimiterX{\infdivx}[2]{(}{)}{%
  #1\;\delimsize\|\;#2%
}
\newcommand{\infdiv}{D_f\infdivx}
\newtheorem{theorem}{Theorem}
\newtheorem{lemma}[theorem]{Lemma}
\theoremstyle{remark}
\newtheorem*{remark}{Remark}
\newtheorem{observation}{Observation}
\begin{document}

\author{Andrey Kardashin}
\email{kardashin.andrey@gmail.com} 
\affiliation{Skolkovo Institute of Science and Technology, Moscow, Russia}
\altaffiliation{Former affiliation. Current address: Donostia International Physics Center, San Sebastián/Donostia, Spain.}

\author{Konstantin Antipin}
\affiliation{Faculty of Physics, M.V. Lomonosov Moscow State University,\\
Leninskie gory, GSP-1, Moscow 119991,  Russia}
\affiliation{Skolkovo Institute of Science and Technology, Moscow, Russia }

\title{On measurement-dependent variance in quantum neural networks}

\begin{abstract}

    Variational quantum circuits have become a widely used tool for performing quantum machine learning (QML) tasks on labeled quantum states. 
    In some specific tasks or for specific variational ans\"atze, one may perform measurements on a restricted part of the overall input state. 
    This is the case for, e.g., quantum convolutional neural networks (QCNNs), where after each layer of the circuit a subset of qubits of the processed state is measured or traced out, and at the end of the network one typically measures a local observable. 
    In this work, we demonstrate that measuring observables with restricted support results in larger label prediction variance in regression QML tasks. 
    We show that the reason for this is, essentially, the number of distinct eigenvalues of the observable one measures after the application of a variational circuit.

\end{abstract}

\maketitle

\section{Introduction}

    Quantum machine learning (QML)~\cite{schuld2015introduction, biamonte2017quantum, lloyd2018quantum,schuld2019quantum,mitarai2018quantum} has increasingly focused on learning directly from quantum data, i.e., scenarios in which inputs are quantum states rather than classical feature vectors~\cite{carrasquilla2017machine,van2017learning, uvarov2020machine,patterson2021quantum,gibbs2024dynamical,qiu2019detecting,sanavio2023entanglement,kardashin2025predicting}. 
    Variational quantum circuits~\cite{benedetti2019parameterized,schuld2020circuit} provide a flexible framework for such tasks. 
    That is, in regression problems, one can consider a parameterized unitary transformation $U_{\boldsymbol{\theta}}$ of an observable $M$, $M_{\boldsymbol{\theta}} \equiv U^\dagger_{\boldsymbol{\theta}}\, M\, U_{\boldsymbol{\theta}}$, whose expectation value on a labeled input state $\rho_\alpha$ constitutes an estimation of the label $\alpha$.
    This estimate is obtained by repeated measurements, and its sampling variance~(the square of the standard deviation of the mean) is dictated by the variance of $M_{\boldsymbol{\theta}}$.
    This variance is therefore very important in relation to achieving higher precision of the result given fixed number of measurement shots.
    
    As we show in this work, the structure of $M_{\boldsymbol{\theta}}$ and hence its variance is significantly controlled by the support size of the initial observable $M$. 
    As a result, observables acting on many qubits can enable reduced variance for fixed shot budgets. 
    In contrast, the use of observables with more restricted support ---such as single-qubit Pauli operators--- might lead to higher estimation variance.
    
    This observation has important implications for such architectures as quantum convolutional neural networks (QCNNs), in which one commonly measures a few qubits.
    These and similar networks have demonstrated strong performance in classification and phase-recognition tasks \cite{grant2018hierarchical, cong2019quantum, huggins2019towards, pesah2021absence}, as well as solving various regression problems \cite{wu2021scrambling, shen2020information, nagano2023quantum, umeano2023can}.
    QCNNs employ a hierarchically structured circuit that culminates in a local measurement, often on a single qubit. 
    While this design yields shallow circuits and favorable scaling, it also restricts the structural richness of the readout observable, potentially increasing sampling variance. 
    Our analysis makes this trade-off explicit and quantifies how measurement constraints affect prediction variance in quantum-data regression. 
    Interestingly enough, QCNNs have recently gained considerable attention also in relation to their classical simulability, which is believed to arise from their ability to extract information encoded only in low-weight observables of their input states~\cite{zoeqcnnclass2024}.

    Recent advances in measurement optimization for variational algorithms ---including measurement grouping~\cite{zhu2024optimizing}, classical-shadow-based strategies~\cite{jnane2024quantum, huang2020predicting}, and machine-learning methods~\cite{torlai2020precise}--- offer potential variance reduction techniques. 
    Understanding the relationship between the readout  observable properties and estimator variance is thus essential to designing QML architectures that remain both sample-efficient and experimentally feasible.

    In the present work, we analyze the dependence of the variance of an observable on its structural properties, such as the number of qubits it is supported on, and the degeneracy of its spectrum. 
    This phenomenon is illustrated on a number of regression QML tasks, which include finding the weight in convex combination of states, and also predicting the parameters  of several paradigmatic local Hamiltonian models. 
    We stress that our considerations are not restricted to QCNN architectures alone, and in the next sections we show that similar variance effects manifest themselves in QML with other variational ans\"atze.

    The work is structured as follows.
    In Section~\ref{sec:problem_statement}, we state the regression problem and introduce the notation used in the paper.
    Section~\ref{sec:methods} describes the methods employed in the work, including variational quantum computing framework, as well as the classical and quantum Fisher information applied for assessing the prediction variance.
    Main results of our work are shown in Section~\ref{sec:results}, where we give analytical expressions for the variance for two regression tasks, and support our claims with numerical experiments.
    Section~\ref{sec:conclusion} concludes the paper.
    In Appendices, one can find detailed derivations of our analytical results, descriptions of the variational ans\"atze applied, and additional numerical results.

\section{Problem statement}
\label{sec:problem_statement}

    Consider a set of the form $\mathcal{T} = \left\{ (\rho_{\alpha_j}, \alpha_j ) \right\}_{j=1}^{T}$, where $\rho_{\alpha_j}$ are quantum states labeled by $\alpha_j \in \R$.
    Our goal is to use $\mathcal{T}$ for learning to predict the label $\alpha$ for a given datum $\rho_\alpha$ not present in $\mathcal{T}$.
    Essentially, we want to solve a regression problem, but with a peculiarity that $\rho_\alpha$ are quantum states. 

    Since the labeled data points $\rho_\alpha$ are quantum states, it would be natural to seek for a prediction $\mathsf{a}$ of the label $\alpha$ as the expectation of an observable $M$ measured in $\rho_\alpha$.
    That is, $\mathsf{a} = \langle M \rangle_{\rho_\alpha} \equiv \Tr M \rho_\alpha = \alpha + b_\alpha$ with $b_\alpha$ being a bias.
    Since $M$ is a Hermitian operator, it can be represented as 
    \begin{equation}
        \label{eq:obs-no_par}
        M = \sum_{i=1}^D \lambda_i \Lambda_i,
    \end{equation}
    where $\boldsymbol{\Lambda}=\{\Lambda_i\}_{i=1}^D$ are orthogonal projectors, and $\boldsymbol{\lambda}=\{\lambda_i\}_{i=1}^D$ are real coefficients.

    Our goal is finding an observable $M$ which gives accurate predictions $\mathsf{a}=\langle M \rangle_{\rho_\alpha} = \alpha + b_\alpha$ for $\alpha$ with small bias $b_\alpha$ and presumably low variance $\Delta^2_{\rho_\alpha} M \equiv \langle M^2 \rangle_{\rho_\alpha} - \langle M \rangle_{\rho_\alpha}^2$.
    As stated earlier, this variance defines the number of measurement shots one has to conduct for achieving a given estimation precision  \cite{kreplin2024reduction}.
    In our work, we show that depending on the connection between $\rho_\alpha$ and $\alpha$, for reducing the variance one may need a different number of terms $D$ in \eqref{eq:obs-no_par}, or, similarly, different dimensiounalities of the projectors $\Lambda_i$.

\section{Methods}
\label{sec:methods}

    Following the methodology presented in \cite{kardashin2025predicting}, we now show how one can find an observable $M$ using the variational quantum computing approach \cite{cerezo2021variational}.
    That is, one can parametrize the projectors in \eqref{eq:obs-no_par} as $\Lambda_i(\boldsymbol{\theta}) = U^\dagger_{\boldsymbol{\theta}} \big(\Id^{\otimes(n-m)} \otimes \ketbra{i}{i} \big) U_{\boldsymbol{\theta}}$, where $U_{\boldsymbol{\theta}}$ is a variational quantum circuit, i.e., a unitary operator parametrized by $\boldsymbol{\theta} \subset \R$, $\Id$ is the single-qubit identity operator, and $\ketbra{i}{i}$ is the projector onto the $i$th state of the computational basis of $m \leqslant n$ qubits.
    Therefore, the parametrized observable takes the form
    \begin{equation}
        \label{eq:obs-par}
        M_{\boldsymbol{\lambda}, \boldsymbol{\theta}} = \sum_{i=1}^{2^m} \lambda_i \; U^\dagger_{\boldsymbol{\theta}} \left(\Id^{\otimes(n-m)} \otimes \ketbra{i}{i} \right) U_{\boldsymbol{\theta}}.
    \end{equation}
    
    Schematically, the process of measuring $M_{\boldsymbol{\lambda}, \boldsymbol{\theta}}$ in an $n$-qubit labeled state $\rho_\alpha$ is depicted in Fig.~\ref{fig:circuit}.
    One can view this as measuring the last $m \leqslant n$ qubits of the transformed labeled state $\rho_\alpha(\boldsymbol{\theta}) \equiv U_{\boldsymbol{\theta}}\rho_\alpha U^\dagger_{\boldsymbol{\theta}}$ in the computational basis, with probability $p_i (\boldsymbol{\theta})= \Tr\! \left[\left( \Id^{\otimes (n - m)} \otimes \ketbra{i} \right) \rho_\alpha(\boldsymbol{\theta})\right]$ obtaining the outcome $i$ associated with $\lambda_i$, and evaluating the expectation as $\langle  M_{\boldsymbol{\lambda}, \boldsymbol{\theta}} \rangle_{\rho_\alpha} = \sum_{i=1}^{2^m} p_i(\boldsymbol{\theta}) \lambda_i$.
    Note that we can also compute the variance as $\Delta_{\rho_\alpha}^2 M_{\boldsymbol{\lambda}, \boldsymbol{\theta}} = \sum_{i=1}^{2^m} p_i(\boldsymbol{\theta}) \lambda_i^2 - \big(\sum_i p_i(\boldsymbol{\theta}) \lambda_i\big)^2$. 

    As an alternative to measuring $m \leqslant n$ qubits of the labeled state $\rho_\alpha$, the qubits to be measured can be introduced as the auxiliary ones, as also shown in Fig.~\ref{fig:circuit}.
    That is, attaching the $m_{\mathrm{a}}$ qubits in the state $\ketbra{0}$, one can find a variational circuit $U_{\boldsymbol{\theta}}$ acting on the joint state $\rho_\alpha\otimes\ketbra{0}^{\otimes m_{\mathrm{a}}}$ such that it reproduces the measurement results of the observable \eqref{eq:obs-no_par} with arbitrary eigenprojectors $\Lambda_i$, namely,
    \begin{equation}
        \label{eq:naimark-variational}
        \Tr \Lambda_i \rho_\alpha = \Tr\Big[U^\dagger_{\boldsymbol{\theta}} \big(\Id \otimes \ketbra{i}\big) U_{\boldsymbol{\theta}} \, \big(\rho_\alpha\otimes\ketbra{0}^{\otimes m_{\mathrm{a}}}\big) \Big].
    \end{equation}
    This technique is known as the Naimark's extension \cite{rethinasamy2023estimating}.

    Given a training set $\mathcal{T} = \left\{ (\rho_{\alpha_j}, \alpha_j ) \right\}_{j=1}^{T}$, the optimal parameters $(\boldsymbol{\lambda}^*, \boldsymbol{\theta}^*)$ for $M_{\boldsymbol{\lambda}, \boldsymbol{\theta}}$ can be found by solving
    \begin{multline}
        \label{eq:ls_min}
        (\boldsymbol{\lambda}^*, \boldsymbol{\theta}^*) = \arg\min_{\boldsymbol{\lambda}, \boldsymbol{\theta}} 
        \Bigg[ 
        w_\mathrm{ls} \sum_{j=1}^T \Big(\alpha_j - \langle M_{\boldsymbol{\lambda}, \boldsymbol{\theta}} \rangle_{\rho_{\alpha_j}}  \Big)^2 
        \\+
        w_\mathrm{var} \sum_{j=1}^T \Delta_{\rho_{\alpha_j}}^2 M_{\boldsymbol{\lambda}, \boldsymbol{\theta}}
        \Bigg], 
    \end{multline}
    where $\wls, \wvar > 0$.
    That is, we simultaneously minimize the least squares between the labels $\alpha_j$ and our predictions $\mathsf{a}_j = \langle M_{\boldsymbol{\lambda}, \boldsymbol{\theta}} \rangle_{\rho_{\alpha_j}}$ with weight $\wls$, and the sum of variances $\Delta^2_{\rho_{\alpha_j}} M_{\boldsymbol{\lambda}, \boldsymbol{\theta}}$ with weight $\wvar$.

    \begin{figure}
        \centering
        \begin{equation*}
            \Qcircuit @C=1em @R=1em{
                & \qw & \qw/_{n-m} & \qw & \multigate{1}{U_{\boldsymbol{\theta}}} &        &        & \\
                & \qw & \qw/_m     & \qw & \ghost{U_{\boldsymbol{\theta}}}        & \qw/_m & \meter & \;\;\qquad \overset{\,p_i}{\underset{}{\longrightarrow}} i \mapsto \lambda_i 
                \inputgroupv{1}{2}{.75em}{1.em}{\rho_\alpha}
            }
        \end{equation*}
        \vspace{1em}
        \begin{equation*}
            \Qcircuit @C=1em @R=1em{
                \rho_\alpha   & & & \qw/_n & \multigate{1}{U_{\boldsymbol{\theta}}} &     &        & \\
                \ketbra{0}{0}^{\otimes m_{\mathrm{a}}} & & & \qw/_{m_{\mathrm{a}}} &        \ghost{U_{\boldsymbol{\theta}}} & \qw/_{m_{\mathrm{a}}} & \meter & \;\;\qquad \overset{\,p_i}{\underset{}{\longrightarrow}} i \mapsto \lambda_i
            }
        \end{equation*}
        \caption{
            Upper: Schematic representation of measuring the observable \eqref{eq:obs-par} in an $n$-qubit state $\rho_\alpha$, with $m$ qubits being measured.
            Lower: Instead of measuring the $m$ qubits of $\rho_\alpha$, one can introduce $m_{\mathrm{a}}$ auxiliary qubits via the Naimark's extension as in \eqref{eq:naimark-variational}, which allows obtaining the eigenprojectors of arbitrary ranks.
        }
        \label{fig:circuit}
    \end{figure}
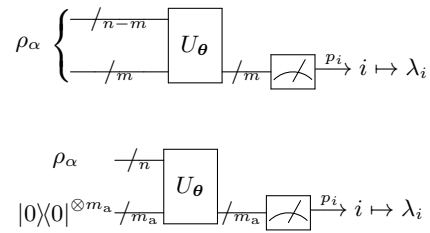

    Looking at \eqref{eq:obs-par}, one may notice that the (eigen)projectors of the observable $M_{\boldsymbol{\lambda}, \boldsymbol{\theta}}$ have the rank $2^{n-m}$.
    This rank can be tuned by measuring different numbers of qubits $m$.
    Particularly, one can employ $U_{\boldsymbol{\theta}}$ to be a QCNN, for which typically $m=1$, and which produces two projectors of dimensionality $2^{n-1}$.
    As mentioned earlier, and as we show in our work, this may affect the quality of the observable found for predicting the label $\alpha$ of $\rho_\alpha$.
    Namely, smaller $m$ may still produce observables \eqref{eq:obs-par} giving label predictions $\mathsf{a} = \langle M \rangle_{\rho_{\alpha}} = \alpha + b_\alpha$ with smaller bias $b_\alpha$, but with larger variance $\Delta_{\rho_{\alpha}}^2 M$.

    For this variance, one can write \cite{shettell2022quantum, kardashin2025predicting}
    \begin{equation}
        \label{eq:var-qcrb}
        \frac{\Delta_{\rho_\alpha}^2 M} {\big| \partial_\alpha \langle M \rangle_{\rho_\alpha} \big|^2} \geqslant \frac{1}{I_c(\boldsymbol{\Lambda}, \rho_\alpha)} \geqslant \frac{1}{I_q(\rho_\alpha)},
    \end{equation}
    where we have used the shorthand $\partial_\alpha=\frac{\partial}{\partial \alpha}$.
    Essentially, these inequalities follow from the error propagation formula \cite{toth2014quantum,pezze2018quantum}, and classical and quantum Cramer-Rao bounds \cite{sidhu2020geometric}.
    The central term here is the reciprocal of the classical Fisher information
    \begin{equation}
        \label{eq:cfi}
        I_c(\boldsymbol{\Lambda}, \rho_\alpha)=\sum_{i=1}^{2^m} \frac{\left(\partial_\alpha p_i\right)^2}{p_i}
    \end{equation}
    with $p_i = \Tr \Lambda_i \rho_\alpha$ \cite{meyer2021variational}.
    The right-hand side of the inequality contains the quantum Fisher information, which, among other ways \cite{sidhu2020geometric}, can be calculated as
    \begin{equation}
        \label{eq:qfi-fidelity}
        I_q(\ra) = 8\frac{1 - F(\ra, \rho_{\alpha + \mathrm{d}\alpha})}{\mathrm{d}\alpha^2}
    \end{equation}
    with $F(\rho, \tau) = \Tr \sqrt{\sqrt{\rho} \tau \sqrt{\rho}}$ being the fidelity between the states $\rho$ and $\tau$.
    In this work, we use the inequality \eqref{eq:var-qcrb} for assessing the quality of the observable found, e.g., by solving \eqref{eq:ls_min}.

\section{Results}
\label{sec:results}

    In this section, we show our main results.
    First, we derive the optimal observable and its variance for a task of predicting the coefficient of a mixture of quantum states.
    Second, we present three interesting observations regarding regression tasks on pure labeled states.
    We support our analytical results with numerical experiments.

    \subsection{Convex combination of states}
    \label{sec:lin-comb}
    
        Let us consider the task of predicting the label $\alpha \in [0, 1]$ which is encoded into an $n$-qubit state as
        \begin{equation}
            \label{eq:lin_al}
            \rho_\alpha = \alpha\rho^{(1)} + (1 - \alpha)\rho^{(2)},
        \end{equation}
        where we assume
        \begin{equation}\label{eq:app:ex1}
            \rho^{(1)} = r\dyad{v_1} + (1 - r)\dyad{v_2},\quad \rho^{(2)} = \frac{1}{2^n} \Id
        \end{equation}
        with $0 \leqslant r\leqslant 1$, and $\ket{v_1}$ and $\ket{v_2}$ being orthonormal vectors.
        That is, the labeled state $\rho_\alpha$ is a mixture of a state $\rho^{(1)}$ of the rank at most two with the maximally mixed state $\rho^{(2)}$.
        Although this model may look rather simple, it nonetheless captures a number of interesting cases of the application of the method \eqref{eq:ls_min} with different numbers of measured qubits $m$ in \eqref{eq:obs-par}.        
        Namely, we will see that for $1 \leqslant k \leqslant m \leqslant n$ there are observables $M_m$ of the form
        \begin{equation}
            \label{eq:obs-m}
            M_{m} = \sum_{i=1}^{2^m} \lambda_i \Lambda_i
        \end{equation}
        such that they may give the label with comparable accuracy, i.e., $\langle M_k\rangle_{\rho_\alpha} \approx \langle M_m \rangle_{\rho_\alpha}$, but with generally larger \textit{total variance} $\int_{0}^1 \Delta^2_{\rho_\alpha} M_k  d\alpha \geqslant \int_{0}^1 \Delta^2_{\rho_\alpha} M_m d\alpha$.
    
        Let us state the problem more formally.
        Essentially, we want to solve the following minimization task:
        \begin{align}
        \label{eq:glob_task}
            \begin{split}
                &M_m^* \in \arg\min_{M_m}\,\int_0^1\,\Delta_{\rho_\alpha}^2 M_m\,d\alpha \\
                &\mathrm{s.t.} \quad\Tr M_m \rho_\alpha  = \alpha.
            \end{split}
        \end{align}
        Although this problem can be solved analytically for the considered state $\rho_\alpha$, the derivations are rather technical and left in Appendix~\ref{app:mixture}.
        However, here we shall outline the recipe for obtaining our solution.
        
        First, as was done in \cite{kardashin2025predicting, Holevo_2012}, considering a small perturbation about an optimal observable $M = M^* + \epsilon Y$, and applying the Lagrange multipliers method,  we arrive to a Lyapunov equation:
        \begin{equation}
            \label{eq:opt_glob}
            \rho_{1/2} M^* + M^*\rho_{1/2} = \rho_{1/2} - \mu(\rho^{(1)}-\rho^{(2)}),
        \end{equation}
        where $\rho_{1/2}$ is $\rho_\alpha$ taken at the point $\alpha=1/2$, and $\mu$ is a Lagrange multiplier.
        Then, using the notion of the symmetric logarithmic derivative \cite{sidhu2020geometric}, we find that $\mu = -2/I_q(\rho_{1/2})$.
        Putting $M^*$ in the form of the eigendecomposition \eqref{eq:obs-m} into \eqref{eq:opt_glob}, we obtain the eigenvalues $\lambda_i$ expressed through the probabilities $p_i^{(1,2)} = \Tr \Lambda_i \rho^{(1, 2)}$.
        This allows to reduce the problem \eqref{eq:glob_task} of minimization over observables $M_m$ to the maximization of an $f$-divergence \cite{sason2016fdiv} between the probability distributions $p^{(1)} = \{p^{(1)}_i\}_{i=1}^{2^m}$ and $p^{(2)} = \{p^{(2)}_i\}_{i=1}^{2^m}$.
        Finally, we prove that this $f$-divergence is maximized on $\Lambda_i = \ketbra{v_i}$, the eigenprojectors of $\rho^{(1)}$ sorted in descending order of the eigenvalues.
        Therefore, we can write explicit formulas for $p_i^{(1,2)}$ and hence for the eigenvalues $\lambda_i$.
        When we measure all $m=n$ qubits, the optimal observable $M_m^*$ has the eigenvalues
        \begin{align}
            \label{eq:lambdas_opt_n}
            \begin{split}
                \lambda_1 &= \frac12 + \frac{2}{I_q(\rho_{1/2})} \frac{r - 2^{-n}}{r + 2^{-n}}, \\
                \lambda_2 &= \frac12 + \frac{2}{I_q(\rho_{1/2})} \frac{(1 - r) - 2^{-n}}{(1 - r) + 2^{-n}}, \\
                \lambda_{i\geqslant 3} &= \frac12 - \frac{2}{I_q(\rho_{1/2})},
            \end{split}
        \end{align}
        where the quantum Fisher information is
        \begin{equation}
            \label{eq:licomb_qfi}
            I_q(\rho_\alpha) = \frac{\alpha DE -2 r (1 - D) + 2^n - 1}{(1 - \alpha) (1 - \alpha D ) (1 - \alpha E )}.
        \end{equation}
        with $D = 1 - 2^n (1 + r)$ and $E = 1 - 2^n r$.
        If we measure $m<n$ qubits, the $2^{n-m}$-degenerate eigenvalues of $M_m^*$ have the form
        \begin{align}
            \label{eq:lambdas_opt_m}
            \begin{split}
                &\lambda_{1 \leqslant i\leqslant 2^{n-m}} = 1, \\
                &\lambda_{i > 2^{n-m}} = \frac{1}{1-2^m},
            \end{split}
        \end{align}
        where we have no dependence on $r$.
        
        Having found the eigenvalues $\lambda_i$ and eigenprojectors $\Lambda_i$, we therefore have found an optimal observable $M^*_m$.
        As noted above, \eqref{eq:opt_glob} is a Lyapunov equation of the form $AX + XA = B$.
        In \cite{Personick71}, it is proven that the solution $X$ is Hermitian and unique as long as $A$ is strictly positive-definite.
        Since $A$ in our case is defined by \eqref{eq:lin_al} and \eqref{eq:app:ex1}, the observable $M^*_m$ we found is unique unless $\alpha=1$.

        Recall that $M_m^*$ gives the label $\alpha$ of $\rho_\alpha$ in expectation, i.e., $\langle M_m^* \rangle_{\rho_\alpha} = \alpha$.
        For us, important also is the variance of $M_m^*$ in the state $\rho_\alpha$.
        When we measure all the qubits, $m=n$, this variance is
        \begin{multline}
            \label{eq:obs_opt-n}
            \Delta^2_{\rho_\alpha} M_n^* = (1 - \alpha) \alpha + \frac{(2 \alpha - 1) (1 - 2^n A) A}{B^2} \\ + \frac{2 (2 + 2^n) C - a \big(1 + 2 (4 + 2^n) C\big)}{B},
        \end{multline}
        where
        \begin{align*}
            A &= (1 - 2 r)^2, \\
            B &= 1 - 2^n + 2^n (2^n - 4) (r - 1) r, \\
            C &= r(r - 1).
        \end{align*}
        At the same time, when only a fraction $m < n$ qubits is measured, the variance becomes
        \begin{equation}
            \label{eq:obs_opt-m}
            \Delta^2_{\rho_\alpha} M_m^* = (1-\alpha)\left( \frac{1}{2^m - 1} + \alpha \right),
        \end{equation}
        with, again, no dependence on $r$.
        
        We remind that the derivation of the formulas in this Section is given in Appendix~\ref{app:mixture}.

        \subsubsection{Number of measured qubits $m$}   

            Let us look more closely at the optimal observable $M_m^*$ and its eigenvalues.
            One can notice that the degeneracy of the eigenvalues is dependent on $r$.
            Indeed, putting $r = 1/2$ into \eqref{eq:lambdas_opt_n}, one obtains a $2$-fold degenerate eigenvalue $\lambda_1 = \lambda_2 = 1$ and a $(2^n-2)$-fold degenerate  $\lambda_3 = \frac{2}{2^n-2}$. 
            In this case, the corresponding observable $M_m^*$ can be constructed from $2$-dimensional projectors, since $(2^n-2)$-dimensional eigenspace for $\lambda_3$ can be split into projectors of this type.
            Evaluation of this observable can be realized by measuring no less than $m = n-1$ qubits of the state $\rho_\alpha$.
            However, since there are only two distinct eigenvalues, applying the Naimark's extension \eqref{eq:naimark-variational}, it is sufficient to introduce only $m_{\mathrm{a}}=1$ auxiliary qubit.

            In the case $r=0$  there is a $(2^n-1)$-fold degenerate $\lambda_1 = \lambda_{i\geqslant 3} = \frac{1}{1-2^n}$, and non-degenerate $\lambda_2=1$. 
            The picture is the same for $r=1$ but with the roles of $\lambda_1$ and $\lambda_2$ exchanged.
            The dimensionalities of the eigenspaces are $1$ and $2^n-1$, and hence the optimal observable can be accessed only by measuring all $m=n$ qubits.
            Alternatively, one can use Naimark's extension with measuring again $m_{\mathrm{a}}=1$ additional qubit. 
            
            When $r\notin\{0, 1/2, 1\}$, there are three different eigenvalues: unique $\lambda_1$ and $\lambda_2$, and $(2^n-2)$-fold degenerate $\lambda_{i\geqslant 3}$.
            To avoid measuring all $m=n$ qubits, one can use the Naimark's extension with introducing $m_{\mathrm{a}}=2$ additional qubits.

        \subsubsection{Numerical experiments}   
        \label{sec:lin_comb-numerics}

            Let us now numerically test our analytical results.
            We consider a state $\rho_\alpha$ of $n=5$ qubits of the form \eqref{eq:lin_al} with
            \begin{equation}
                \label{eq:lincomb_ghz}
                \rho^{(1)} = r\ketbra{\mathrm{GHZ_+}} + (1 - r)\ketbra{\mathrm{GHZ_-}},
            \end{equation}
            where
            \begin{equation}
                \ket{\mathrm{GHZ_\pm}} = \frac{1}{\sqrt{2}}\left( \ket{0}^{\otimes n} \pm \ket{1}^{\otimes n} \right),
            \end{equation}
            and we set $r=1/4$.
            We numerically solve the minimization problem \eqref{eq:ls_min} with the weights $\wls=1$ and $\wvar=10^{-4}$ using the BFGS optimizer \cite{wright1999numerical} built into SciPy \cite{2020SciPy-NMeth}.
            As a variational ansatz $U_{\boldsymbol{\theta}}$ in \eqref{eq:obs-par}, we employ a hardware-efficient ansatz \cite{kandala_hardware-efficient_2017} of $l=5$ layers described in Appendix~\ref{app:hea}.
            The training set $\mathcal{T}=\{(\rho_{\alpha_i}, \alpha_i)\}_{i=1}^{10}$ consisted of ten states $\rho_{\alpha_i}$ with equidistant $\alpha_i \in [0,1]$.
    
            In Fig.~\ref{fig:lin_comb}, we plot the error between the predicted $\mathsf{a} = \Tr M_m^* \rho_\alpha$ and true label $\alpha$ for the optimized observable $M_m^*$ and numbers of measured qubits $m\in \{ 1, 3, 5 \}$.
            In addition, we show the analytical variances \eqref{eq:obs_opt-n} and \eqref{eq:obs_opt-m}, as well as the right-hand side of the bound \eqref{eq:var-qcrb} with the quantum Fisher information \eqref{eq:licomb_qfi}.
            As one can see, the prediction error is very small for all $m$.
            However, as anticipated, the variance of the numerically optimized observable $M_m^*$ increases with decreasing $m$.
            For $m=n$, the variance approaches closely the right-hand side of the bound \eqref{eq:var-qcrb}, but does not saturate it for $\alpha$ around $0$; this is in agreement with the analytical variance \eqref{eq:obs_opt-n} also plotted in the figure.

            We note that one generally cannot directly compare the variance of an observable $M$ with the bound \eqref{eq:var-qcrb}.
            Indeed, this bound also includes the derivative of the expectation characterizing the prediction bias \cite{farokhi2025sample}, i.e., $\partial_\alpha\langle M \rangle_{\rho_\alpha} = 1 + \partial_\alpha b_\alpha$.
            However, in our numerical experiments, we obtained observables with $|\partial_\alpha\langle M_m^* \rangle_{\rho_\alpha}|^2\approx1$, which allows us to make the mentioned above comparison.
    
            \begin{figure*}
                \centering
                \includegraphics[width=0.4375\linewidth]{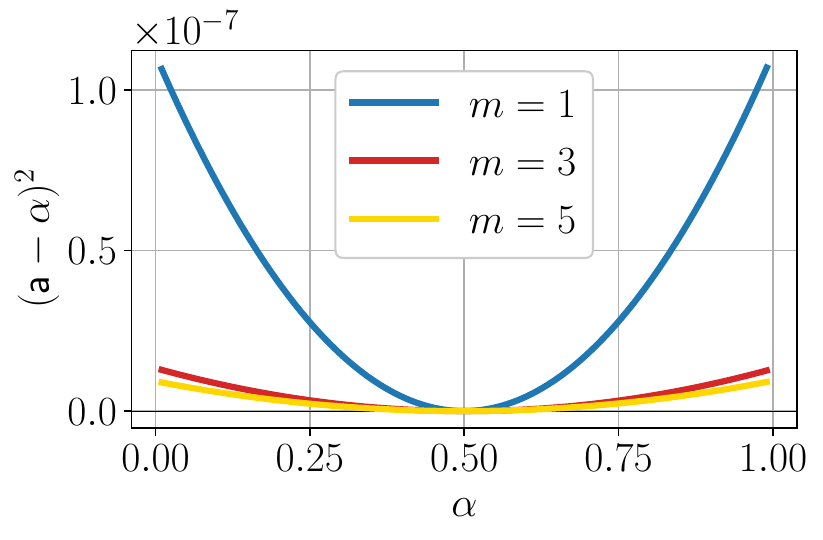}
                \hspace{2em}
                \includegraphics[width=0.45\linewidth]{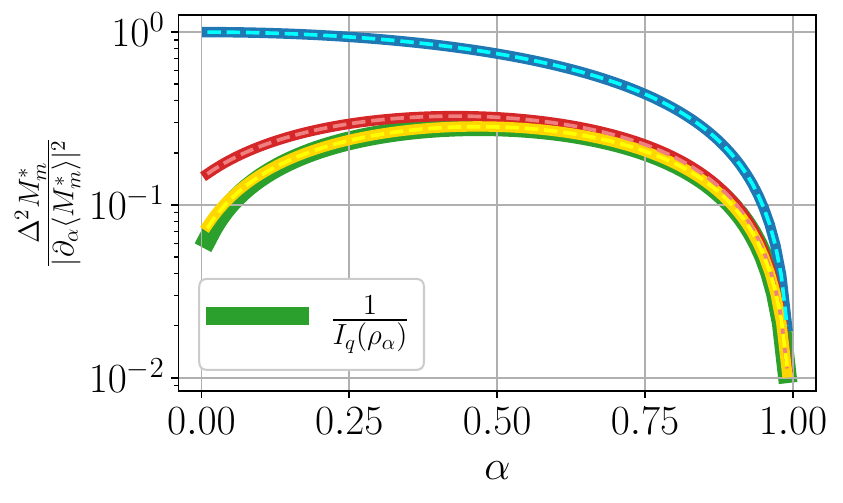}
                \caption{
                    Squared difference between the prediction $\mathsf{a}=\langle M_m^*\rangle_{\rho_\alpha}$ and the true parameter $\alpha$
                    (left) and the variance of the optimized observable $M_m^*$ (right) obtained via numerically solving \eqref{eq:ls_min}.
                    The training set is $\mathcal{T}=\{(\rho_{\alpha_i}, \alpha_i)\}_{i=1}^{10}$ with $\rho_\alpha$ being a state of $n=5$ qubits defined by \eqref{eq:lin_al} and \eqref{eq:lincomb_ghz}, and $\alpha$ are picked equidistantly in $[0, 1]$.
                    Different colors indicate different numbers of measured qubits $m \in \{1,3,5\}$ in \eqref{eq:obs-par}.
                    The parametrized unitary $U_{\boldsymbol{\theta}}$ is represented by HEA described in Appendix~\ref{app:hea}.
                    In the right panel, the dashed lines show the analytical variances \eqref{eq:obs_opt-n} and \eqref{eq:obs_opt-m}, while the solid green line stands for the right-hand side of the bound \eqref{eq:var-qcrb}.
                }
                \label{fig:lin_comb}
            \end{figure*}

\subsection{Pure states}
\label{sec:pure_est}

    Consider a family $\{\ket{\psi_\alpha}\}_\alpha$ of \emph{pure} quantum states parametrized by $\alpha$.
    Let us assume that all $\psia$ belong to some  \emph{real} $d$-dimensional subspace. By that, we mean that each vector $\psia$ can  be expressed as a linear combination of some fixed~(not depending on $\alpha$) orthonormal vectors $\{\ket{c_j}\}_{j=1}^d$:
    \begin{equation}
        \psia = \sum_{j=1}^d\,a_j(\alpha)\ket{c_j},
    \end{equation}
    with~(complex) coefficients $a_j$ with phases not dependent on $\alpha$, i.e., $a_j(\alpha) = \abs{a_j(\alpha)}e^{i\phi_j}$ with $\phi_j$ constant. Under this condition, the coefficients $a_j$ can be made real by redefining each vector $\ket{c_j}$ as $e^{i\phi_j}\ket{c_j}$.
    As a consequence, differentiating the normalization condition $\langle\psi_\alpha|\psi_\alpha\rangle = 1$, we obtain an orthogonality condition
    \begin{equation}
        \langle\psi_\alpha|\partial_\alpha\psi_\alpha\rangle = 0,
    \end{equation}
    where $\ket{\partial_\alpha\psi_\alpha}$ denotes the derivative of $\ket{\psi_\alpha}$ with respect to $\alpha$.
    This allows to simplify a standard expression for the quantum Fisher information for pure states \cite{liu2020quantum}:
    \begin{align}
        I_q(\psi_\alpha) 
        &= 4 \left( \langle\partial_\alpha\psi_\alpha|\partial_\alpha\psi_\alpha\rangle + |\langle\psi_\alpha|\partial_\alpha\psi_\alpha\rangle|^2 \right) \nonumber \\
        &= 4 \langle\partial_\alpha\psi_\alpha|\partial_\alpha\psi_\alpha\rangle.
    \end{align}

    Now let us consider a new orthonormal system of vectors
    \begin{equation}\label{eq:unit_transf}
        \ket{v_i} = \sum_{j=1}^d \,u_{ij}\ket{c_j},
    \end{equation}
    which are connected with $\{\ket{c_i}\}_{i=1}^d$ by a unitary transformation $u$.
    With the measurement probabilities in the new basis
    \begin{equation}
        p_i = \big|\langle v_i|\psi_\alpha\rangle\big|^2
    \end{equation}
    one can obtain the classical Fisher information:
    \begin{multline}
        I_c(\{v_i\}_{i=1}^d,\psi_\alpha) = \sum_{i=1}^d\,\frac{(\partial_\alpha p_i)^2}{p_i}  \\
        = \sum_{i=1}^d\,\Bigg[2\big|\langle v_i|\partial_\alpha\psi_\alpha\rangle\big|^2 \hfill \\ \hfill
        + \frac2{\big|\langle v_i|\psi_\alpha\rangle\big|^2} \Re{\langle v_i|\partial_\alpha\psi_\alpha\rangle^2\langle\psi_\alpha|v_i\rangle^2}\Bigg].
    \end{multline}
    Using the fact that the real part is less than the absolute value, we can upper bound this expression as
    \begin{multline}
        I_c(\{v_i\}_{i=1}^d, \psi_\alpha)\leqslant   \sum_{i=1}^d\,\Bigg[2\big|\langle v_i|\partial_\alpha\psi_\alpha\rangle\big|^2  \hfill \\ \hfill
        + \frac2{\big|\langle v_i|\psi_\alpha\rangle\big|^2} \big|\langle v_i|\partial_\alpha\psi_\alpha\rangle\big|^2 \big|\langle\psi_\alpha|v_i\rangle\big|^2\Bigg] \\
        = 4\sum_{i=1}^d\,\big|\langle v_i|\partial_\alpha\psi_\alpha\rangle\big|^2 = 4\langle\partial_\alpha\psi_\alpha|\partial_\alpha\psi_\alpha\rangle = I_q(\alpha).
    \end{multline}
    The upper bound, which is the quantum Fisher information,  can be attained if we choose the transformation $u$ in~(\ref{eq:unit_transf}) to be real (and hence orthogonal). 
    In this case, we refer to the  basis $\ket{v_i}$ as \emph{real}.
    We come to the first conclusion in this section. 
    \begin{observation}\label{app:fact1}
        If $\{\ket{\psi_\alpha}\}_\alpha$ is a parametrized family of pure states in a real $d$-dimensional subspace, then an observable $M$, which has the eigenprojectors $\{ \ketbra{v_i} \}_{i=1}^d$ with elements from a real basis $\{ \ket{v_i} \}_{i=1}^d$, would produce $I_c(\{ v_i \}_{i=1}^d, \psi_\alpha) = I_q(\psi_\alpha)$.
    \end{observation}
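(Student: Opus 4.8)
The plan is to notice that the inequality chain established immediately above the statement already proves $I_c(\{v_i\}_{i=1}^d, \psi_\alpha) \leqslant I_q(\psi_\alpha)$ for \emph{any} orthonormal basis $\{\ket{v_i}\}_{i=1}^d$ related to $\{\ket{c_j}\}_{j=1}^d$ by a unitary $u$, and that the only place where a genuine inequality (rather than an equality) is introduced is the step $\Re{z_i} \leqslant |z_i|$ applied to $z_i = \langle v_i|\partial_\alpha\psi_\alpha\rangle^2\,\langle\psi_\alpha|v_i\rangle^2$. It therefore suffices to show that, when the basis $\{\ket{v_i}\}$ is real, each $z_i$ is real and nonnegative, so that $\Re{z_i} = |z_i|$ and the whole chain collapses to an equality.

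First I would record the reality of the relevant expansions. By the real-subspace hypothesis, after the redefinition $\ket{c_j} \mapsto e^{i\phi_j}\ket{c_j}$ the coefficients $a_j(\alpha)$ of $\psia$ are real; since differentiation acts only on these real scalar functions, $\dpsia = \sum_j a_j'(\alpha)\ket{c_j}$ also has real coefficients. If $u$ is real (orthogonal), then every $\ket{v_i} = \sum_j u_{ij}\ket{c_j}$ has real coefficients too, so the overlaps $\langle v_i|\psi_\alpha\rangle = \sum_j u_{ij}a_j(\alpha)$ and $\langle v_i|\partial_\alpha\psi_\alpha\rangle = \sum_j u_{ij}a_j'(\alpha)$ are real scalars, and $\langle\psi_\alpha|v_i\rangle = \overline{\langle v_i|\psi_\alpha\rangle} = \langle v_i|\psi_\alpha\rangle$.

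Next I would feed these facts into $z_i$. Writing $A = \langle v_i|\partial_\alpha\psi_\alpha\rangle$ and $B = \langle v_i|\psi_\alpha\rangle$, both real, we have $z_i = A^2 B^2 = (AB)^2 \geqslant 0$, hence $\Re{z_i} = z_i = |z_i|$. With this the bounding step becomes exact, and reading the chain backwards yields $I_c(\{v_i\}_{i=1}^d, \psi_\alpha) = 4\sum_i |\langle v_i|\partial_\alpha\psi_\alpha\rangle|^2 = 4\langle\partial_\alpha\psi_\alpha|\partial_\alpha\psi_\alpha\rangle = I_q(\psi_\alpha)$, which is exactly the claim. The one point I would treat with care --- and the only ingredient the argument genuinely relies on rather than on routine computation --- is the reality of the coefficients: it is precisely the assumption of $\alpha$-independent phases $\phi_j$ that makes each $a_j$ real and therefore each $a_j'$ real, so the ``real subspace'' hypothesis is doing all the essential work, and one should confirm that no residual phase survives the redefinition before concluding $\Re{z_i} = |z_i|$.
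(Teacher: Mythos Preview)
Your proposal is correct and follows exactly the paper's approach: the paper derives the upper bound via $\Re z_i \leqslant |z_i|$ and then simply asserts that choosing $u$ real (orthogonal) attains it, while you supply the explicit verification that real $u$ forces $\langle v_i|\psi_\alpha\rangle$ and $\langle v_i|\partial_\alpha\psi_\alpha\rangle$ to be real, hence $z_i=(AB)^2\geqslant 0$ and $\Re z_i=|z_i|$. There is no substantive difference in strategy, only in the level of detail.
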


    Recall that this observable $M$, besides the projectors, also depends  on its eigenvalues $\lambda_i$:
    \begin{equation}
        \label{eq:proj_dec}
        M = \sum_{i=1}^d\, \lambda_i\dyad{v_i}.
    \end{equation}
    The variance of this observable in a state $\psia$ is
    \begin{equation}
        \Delta_{\psi_\alpha}^2 M 
        = \sum_{i=1}^d\,\lambda_i^2p_i - \left(\sum_{i=1}^d\,\lambda_ip_i\right)^2,
    \end{equation}
    with the measurement probabilities $ p_i = \abs{\langle v_i|\psi_\alpha\rangle}^2$.
    Suppose that $d=2$, i.e., there are only two projectors.
    This results in $\partial_\alpha p_1 = -\partial_\alpha p_2$.
    Therefore, the left-hand side of inequality \eqref{eq:var-qcrb} becomes
    \begin{align}
        \frac{\Delta_{\psi_\alpha}^2 M}{|\partial_\alpha\langle M\rangle_{\psi_\alpha}|^2}
        &= \frac{\lambda_1^2p_1 + \lambda_2^2p_2 - \lambda_1^2p_1^2 - \lambda_2^2p_2^2 - 2\lambda_1\lambda_2p_1p_2}{(\partial_\alpha p_1)^2(\lambda_1-\lambda_2)^2} \nonumber \\
        &= \frac{p_1p_2}{(\partial_\alpha p_1)^2}
        = \left(\frac{(\partial_\alpha p_1)^2}{p_1} + \frac{(\partial_\alpha p_2)^2}{p_2}\right)^{-1} \nonumber \\
        &= \frac1{I_c(\{v_i\}_{i=1}^2,\,\psi_\alpha)},
    \end{align}
    with $\lambda_1$ and $\lambda_2$  completely eliminated.
    We come to our second conclusion.
    \begin{observation} 
        \label{app:fact2}
    For a parametrized family of pure states $\{\ket{\psi_\alpha}\}_\alpha$, if the observable $M$ has only two terms in \eqref{eq:proj_dec}, it always gives $\Delta^2_{\psi_\alpha} M/|\partial_\alpha\langle M\rangle_{\psi_\alpha}|^2 = 1/I_c(\{v_i\}_{i=1}^2,\,\psi_\alpha)$ whenever $p_1 + p_2 = 1$, i.e., $\ket{\psi_\alpha}\in\mathrm{span}\{\ket{v_1}, \ket{v_2}\}$.
    \end{observation}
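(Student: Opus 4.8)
The plan is to verify the claimed identity by direct substitution, exploiting the fact that a two-outcome observable restricted to a state lying in $\mathrm{span}\{\ket{v_1},\ket{v_2}\}$ has a variance that factorizes in a way that exactly cancels the eigenvalue dependence against the squared derivative of the expectation value. Since the statement asserts an exact equality with the eigenvalues $\lambda_1,\lambda_2$ absent, the whole task reduces to an elementary algebraic manipulation, and the real content is identifying which hypothesis makes the cancellation happen.

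First I would write $M = \lambda_1\dyad{v_1} + \lambda_2\dyad{v_2}$ and set $p_i = |\langle v_i|\psi_\alpha\rangle|^2$, so that $\langle M\rangle_{\psi_\alpha} = \lambda_1 p_1 + \lambda_2 p_2$. The key structural input is the hypothesis $p_1 + p_2 = 1$, which holds precisely when $\psia\in\mathrm{span}\{\ket{v_1},\ket{v_2}\}$; differentiating this normalization gives $\partial_\alpha p_1 = -\partial_\alpha p_2$, as already noted above the statement.

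Next I would assemble the two ingredients of the ratio. For the denominator, $\partial_\alpha\langle M\rangle_{\psi_\alpha} = \lambda_1\partial_\alpha p_1 + \lambda_2\partial_\alpha p_2 = (\lambda_1-\lambda_2)\partial_\alpha p_1$, so that $|\partial_\alpha\langle M\rangle_{\psi_\alpha}|^2 = (\lambda_1-\lambda_2)^2(\partial_\alpha p_1)^2$. For the numerator, I would expand $\Delta_{\psi_\alpha}^2 M = \lambda_1^2 p_1 + \lambda_2^2 p_2 - (\lambda_1 p_1 + \lambda_2 p_2)^2$ and substitute $1 - p_1 = p_2$ and $1 - p_2 = p_1$; the diagonal and cross terms then combine so that the variance collapses to $p_1 p_2(\lambda_1-\lambda_2)^2$. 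Forming the ratio, the common factor $(\lambda_1-\lambda_2)^2$ cancels between numerator and denominator, leaving $p_1 p_2/(\partial_\alpha p_1)^2$ with no remaining trace of $\lambda_1,\lambda_2$.

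Finally I would identify this with the reciprocal classical Fisher information: using $\partial_\alpha p_2 = -\partial_\alpha p_1$ together with $p_1 + p_2 = 1$, one has $I_c(\{v_i\}_{i=1}^2,\psi_\alpha) = (\partial_\alpha p_1)^2(1/p_1 + 1/p_2) = (\partial_\alpha p_1)^2/(p_1 p_2)$, whence the ratio equals $1/I_c(\{v_i\}_{i=1}^2,\psi_\alpha)$, as claimed. The only genuinely load-bearing step—and hence the place to be careful—is recognizing that it is the normalization $p_1 + p_2 = 1$ that forces the variance to factor as $p_1 p_2(\lambda_1-\lambda_2)^2$; without that constraint the substitutions $1-p_i = p_{3-i}$ fail, the cross terms do not close up, and the eigenvalues would survive in the ratio. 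Thus the entire conclusion hinges on the state lying in the two-dimensional span, which is exactly the hypothesis singled out in the statement.
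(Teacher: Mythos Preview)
Your proposal is correct and follows essentially the same route as the paper: compute the variance and the derivative of the expectation explicitly for the two-projector observable, use $p_1+p_2=1$ (hence $\partial_\alpha p_1=-\partial_\alpha p_2$) to factor the variance as $p_1p_2(\lambda_1-\lambda_2)^2$, cancel $(\lambda_1-\lambda_2)^2$ against the denominator, and identify $p_1p_2/(\partial_\alpha p_1)^2$ with $1/I_c$. The paper's derivation is the same algebra presented as a single chain of equalities, so there is no substantive difference in approach.
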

    
    Now, let us suppose that a family of parametrized pure states $\rho_\alpha = \dyad{\psi_\alpha}$ belongs to a $2$-dimensional real subspace $V$.
    One can consider a general task of predicting $\alpha$ by the minimization procedure~(\ref{eq:ls_min})\footnote{Alternatively, one may consider finding the optimal observable analytically via, e.g, Eq.~40 in \cite{kardashin2025predicting}.}.
    For the states under consideration, the optimal observable $M^*$ can be taken to be supported on $V$ and represented by a real $2\times2$ matrix. Its eigenvectors  will then constitute a real basis of $V$. According to Observation~\ref{app:fact2}, the observable will saturate the inverse classical Fisher information at each point $\alpha$. The latter, according to Observation~\ref{app:fact1}, will saturate the inverse quantum Fisher information at each point. Our third conclusion is then the following.
    \begin{observation}
        \label{app:fact3}
        For a parametrized family of pure states $\{\ket{\psi_\alpha}\}_\alpha$ from a real two-dimensional subspace, one can always find an observable $M$ giving $\Delta^2_{\psi_\alpha} M/|\partial_\alpha\langle M\rangle_{\psi_\alpha}|^2 = 1/I_q(\psi_\alpha)$.
    \end{observation}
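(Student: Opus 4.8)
The plan is to exhibit a single observable $M$ that simultaneously satisfies the hypotheses of Observation~\ref{app:fact1} and Observation~\ref{app:fact2}, so that the two equalities they supply can be chained into the claimed saturation. Since both observations are already established, the entire content here is that their hypotheses are mutually compatible and can be met by one fixed, $\alpha$-independent choice of $M$.

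First I would construct $M$ explicitly and check the hypotheses. The family lives in a real two-dimensional subspace $V=\mathrm{span}\{\ket{c_1},\ket{c_2}\}$, so I would take $M$ to act nontrivially only on $V$ and to be represented there by a real symmetric $2\times2$ matrix with two \emph{distinct} eigenvalues $\lambda_1\neq\lambda_2$ (e.g.\ $\mathrm{diag}(\lambda_1,\lambda_2)$ in the $\ket{c_j}$ basis). Because a real symmetric matrix has real eigenvectors, the resulting $\ket{v_1},\ket{v_2}$ are real linear combinations of $\ket{c_1},\ket{c_2}$ and hence form a \emph{real} orthonormal basis of $V$ in the sense of Observation~\ref{app:fact1}; the action of $M$ off $V$ is irrelevant because every $\ket{\psi_\alpha}$ lies in $V$. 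Since $\ket{\psi_\alpha}\in V=\mathrm{span}\{\ket{v_1},\ket{v_2}\}$, the probabilities obey $p_1+p_2=|\langle v_1|\psi_\alpha\rangle|^2+|\langle v_2|\psi_\alpha\rangle|^2=1$, so $M$ has exactly two terms in \eqref{eq:proj_dec} and Observation~\ref{app:fact2} applies, giving $\Delta^2_{\psi_\alpha}M/|\partial_\alpha\langle M\rangle_{\psi_\alpha}|^2=1/I_c(\{v_i\}_{i=1}^2,\psi_\alpha)$ with $\lambda_1,\lambda_2$ eliminated. At the same time, as $\{\ket{v_1},\ket{v_2}\}$ is real, Observation~\ref{app:fact1} with $d=2$ gives $I_c(\{v_i\}_{i=1}^2,\psi_\alpha)=I_q(\psi_\alpha)$; chaining the two equalities yields $\Delta^2_{\psi_\alpha}M/|\partial_\alpha\langle M\rangle_{\psi_\alpha}|^2=1/I_q(\psi_\alpha)$. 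Since both constituent statements hold at each fixed $\alpha$, the saturation is achieved pointwise by the single fixed $M$.

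The only point requiring care is keeping the ratio well-defined, i.e.\ ensuring the denominator $\partial_\alpha\langle M\rangle_{\psi_\alpha}=(\lambda_1-\lambda_2)\,\partial_\alpha p_1$ does not vanish. The factor $\lambda_1-\lambda_2$ is nonzero by the distinct-eigenvalue choice, while $\partial_\alpha p_1\neq0$ holds whenever the family genuinely moves within $V$ as $\alpha$ varies, which is the case for any nontrivially parametrized family. I do not anticipate a substantive obstacle, since the real difficulty has already been absorbed into Observations~\ref{app:fact1} and~\ref{app:fact2}; the mild subtlety worth stating cleanly is only that a real symmetric observable meets both hypotheses at once.
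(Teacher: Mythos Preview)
Your proposal is correct and follows essentially the same approach as the paper: pick an observable supported on $V$ that is represented by a real $2\times2$ matrix, so that its eigenvectors form a real basis of $V$, and then chain Observation~\ref{app:fact2} with Observation~\ref{app:fact1}. Your version is in fact slightly more direct than the paper's, since you simply exhibit a real symmetric $2\times2$ observable with distinct eigenvalues rather than invoking the minimization procedure~\eqref{eq:ls_min} to produce it, and you add the (minor) caveat about nonvanishing of $\partial_\alpha\langle M\rangle_{\psi_\alpha}$ that the paper leaves implicit.
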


    \subsubsection{Numerical experiments}
    \label{sec:ising-numerics}

        If an $n$-qubit state $\ket{\psi_\alpha}$ lives in a real subspace $V$, the above Observations imply certain achievable efficiency of predicting $\alpha$ depending on $\mathrm{dim}\, V$ and the number of measured qubits $m$.
        In this section, we support the Observations with numerical experiments for the transverse field Ising Hamiltonian
        \begin{equation}
            \label{eq:ising_ham}
            H_h = \sum\limits_{i=1}^n \left( Z_i Z_{i+1} + h X_i \right),
        \end{equation}
        where $X_i$ and $Z_i$ are Pauli operators acting on the $i$th qubit, and we apply periodic boundary conditions $Z_{n+1} \equiv Z_1$.
        That is, we consider the task of predicting the transverse field $h$ given the ground state $\ket{\psi_h}$ of $H_h$.
        
        For $n=3$ qubits, this Hamiltonian can be diagonalized using symbolic algebra software, such as SymPy  \cite{10.7717/peerj-cs.103} or Mathematica \cite{Mathematica}.
        This way, one can verify that the (unnormalized) ground state has the form
        \begin{gather*}
            \ket{\psi_h} = \ket{\Psi} + \frac{h - 2 + 2 \sqrt{1 + h(h - 1)}}{3 h} \ket{\Psi^\perp},
        \end{gather*}
        where $\ket{\Psi} = \ket{000}+\ket{111}$.
        That is, the ground state is real and belongs to a two-dimensional subspace $V= \mathrm{span}\{\ket{\Psi}, \ket{\Psi^\perp}\}$.
        Therefore, there must be an optimal observable of the form \eqref{eq:obs-m} with $m=1$ measured qubit for which all the three Observations hold.

        \begin{figure*}
            \centering
            \includegraphics[width=0.3315\linewidth]{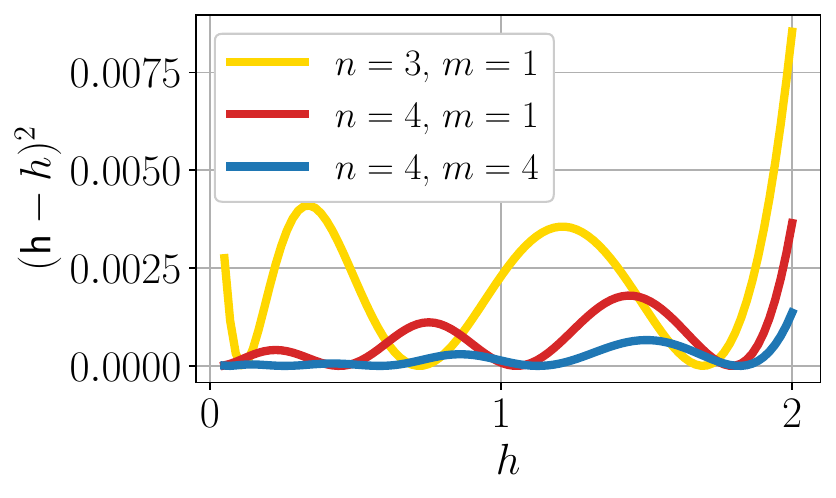}
            \includegraphics[width=0.325\linewidth]{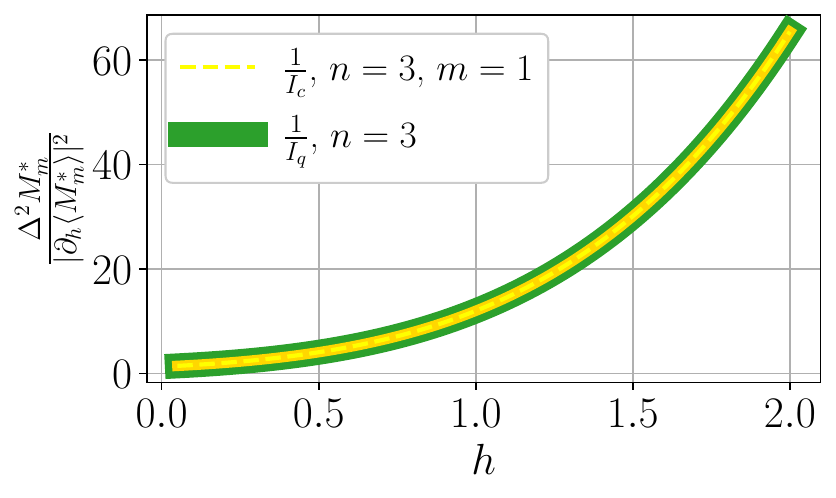}
            \includegraphics[width=0.325\linewidth]{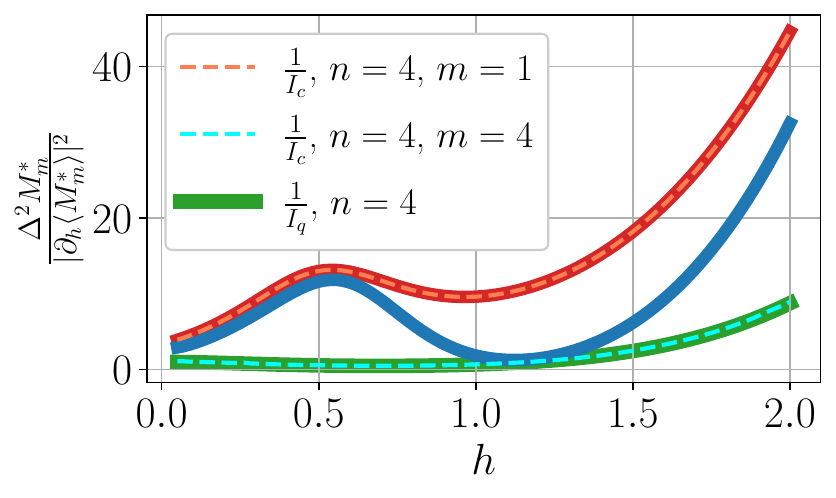}
            \caption{
                Numerical results for predicting the transverse field $h$ of the Ising Hamiltonian \eqref{fig:ising} of $n\in\{3,4\}$ qubits.
                The observable $M_m^*$ is obtained via numerically solving \eqref{eq:ls_min}.
                Left: Squared difference between the prediction $\mathsf{h}=\langle M_m^*\rangle_{\psi_h}$ and true $h$.
                Center: Variance of $M_m^*$ for the case of $n=3$ qubits with $m=1$ qubit measured.
                Right: Variance of $M_m^*$ for the case of $n=4$ qubits with $m \in \{1,4\}$ measured qubits.
                The training set is $\mathcal{T}=\big\{\big(|\psi_{h_i}\rangle, h_i\big)\big\}_{i=1}^{10}$ with $h$ picked equidistantly from $[0.05, 2]$.
                Different colors indicate different numbers of measured qubits $m$ in \eqref{eq:obs-par}, as well as the number of qubits $n$ of the ground state.
                The dashed lines of the corresponding colors show the central part of the the bound \eqref{eq:var-qcrb}, while the solid green line stands for right-hand side of it.
            }
            \label{app:fig:ising-3-4-qubits}
        \end{figure*}

        In the case of $n=4$ qubits, the ground space is not anymore two-dimensional, but it still belongs to a real subspace.
        Therefore, we can expect that with $m=1$ measured qubit we can saturate, by Observation~\ref{app:fact2}, the first inequality in \eqref{eq:var-qcrb}, but not the second.
       If we measure all the qubits, $m=n=4$, Observation~\ref{app:fact1} implies that the classical Fisher information $I_c$  may saturate the second inequality in \eqref{eq:var-qcrb}, but the observable itself need not saturate the first inequality in~\eqref{eq:var-qcrb}.

        To test the claims above, we generate a training set $\mathcal{T}=\big\{\big(|\psi_{h_i}\rangle, h_i\big)\big\}_{i=1}^{10}$ with equidistant $h_i \in [0.05, 2]$ and numerically solve \eqref{eq:ls_min}.
        Recall that with this we intend to train an observable $M$ such that the expectation $\langle M\rangle_{\psi_h} = h + b_h$ has small prediction bias $b_h$ and presumably low variance $\Delta^2_{\psi_h} M$.
        To represent the unitary $U_{\boldsymbol{\theta}}$, we again use HEA described in Appendix~\ref{app:hea};
        for a system of $n=3$ qubits we used $l=2$ layers of the ansatz, and $n=4$ qubits we used $l=4$ layers.

        The results plotted in Fig.~\ref{app:fig:ising-3-4-qubits} show that when the Hamiltonian \eqref{eq:ising_ham} is of $n=3$ qubits, it is enough to measure only $m=1$ qubit for saturating the both inequalities in \eqref{eq:var-qcrb}.
        This means that the found observable $M^*_{m=1}$ has the lowest  possible variance (adjusted by $|\partial_h\langle M^*_{m=1}\rangle_{\psi_h}|^2$).
        This agrees with Observation~\ref{app:fact3}, and hence \ref{app:fact1} and \ref{app:fact2}.

        For the Ising Hamiltonian of $n=4$ qubits, in Fig.~\ref{app:fig:ising-3-4-qubits} we see that measuring $m=1$ qubit allows to find an observable $M^*_{m=1}$ saturating only the first inequality in \eqref{eq:var-qcrb}, as predicted by Observation~\ref{app:fact2}.
        If all $m=n=4$ qubits are measured, then  $I_c$ saturates   the second inequality in \eqref{eq:var-qcrb}, which is in agreement with Observation~\ref{app:fact1}. At the same time,
        the variance of $M^*_{m=4}$ lies \emph{above} the lower bound $1/I_q(\psi_\alpha)$, but it is still lower than the variance of $M^*_{m=1}$.
        One can also notice that the prediction $\mathsf{h} = \langle M^*_{m}\rangle_{\psi_h}$ is more accurate with $m=4$ than with $m=1$.

    \subsection{Additional numerical results}

        In previous sections, we observed that measuring fewer qubits $m$ may result not only in greater error of label prediction, but also in larger variance of it.
        In Appendix~\ref{app:additional_numerics}, we further support these observations with numerical experiments of label prediction for ground states $\ket{\psi_\alpha}$ of parametrized Hamiltonians $H_\alpha$.
        That is, we solve this prediction task for the Schwinger Hamiltonian (Appendix~\ref{app:schwinger-numerics}), and the cluster Hamiltonian (Appendix~\ref{app:cluster-numerics}).
        These Hamiltonians were studied previously in \cite{nagano2023quantum} and \cite{umeano2023can} in the context of classification of the ground states. 
        Additionally, we again consider the Ising Hamiltonian, but with a greater number of qubits (Appendix~\ref{app:sec:ising-numerics}).

        Recall that we obtain the observables for prediction by numerically solving \eqref{eq:ls_min}.
        For the mentioned above regression problems, we consider various ans\"atze for representing the parametrized unitary $U_{\boldsymbol{\theta}}$.
        Alongside with HEA used earlier, we apply QCNNs, as well as the Hamiltonian variational ansatz \cite{wiersema2020exploring}; these ans\"atze are described in Appendix~\ref{app:sec:ansatzes}.

        Overall, the numerical results in Appendix~\ref{app:additional_numerics} support the claim that the optimal observables $M^*_m$ of the form \eqref{eq:obs-m} predict the labels $\alpha$ of labeled states $\ket{\psi_\alpha}$ with larger variance with fewer measured qubits $m$.

\section{Conclusion}
\label{sec:conclusion}

    When solving a QML task, one may process labeled quantum states with a trained variational ansatz, and then measure a local observable the expectation of which is used for label prediction. 
    Indeed, for instance, one of the distinct features of QCNNs is that one commonly measures a few qubits of the transformed state.
    That is, measured is a local observable having a few distinct eigenvalues.
    In our work, we showed that one may need observables with more eigenvalues for predicting the labels of labeled states with lower variance.

    First, in Section~\ref{sec:lin-comb} we considered a task of predicting the coefficient of a mixture of two quantum states.
    For this task, we have analytically found an optimal observable such that it gives the mixture coefficient in expectation with the minimal possible variance.
    We have also shown that depending on the structure of the states in the mixture, one may need to measure different numbers of qubits for achieving this minimal variance.
    This number of measured qubits is connected to, essentially, the dimensionalities of the eigenprojectors of the optimal observable.
    Therefore, to decrease the number of qubits to be measured, one can employ Naimark's extension.

    Later, in Section~\ref{sec:pure_est} we studied the task of regression on labeled states which are pure.
    We have derived three interesting observations about achieving the lowest possible variance for such labeled states.
    Particularly, we showed that if a pure state lives in a real two-dimensional subspace, one can always find an observable with the variance equal to the reciprocal of quantum Fisher information, saturating therefore the inequalities \eqref{eq:var-qcrb}.

    Finally, we considered regression problems of predicting the parameter of a parametrized Hamiltonian given its ground state.
    Our numerical experiments confirm the claim that the more qubits one measures (i.e., the more distinct eigenvalues one has in the observable measured), the lower label prediction variance one may get.
    The results of these experiments can be found in Appendix~\ref{app:additional_numerics}.
    
    We emphasize that the origin of higher label prediction variance may be not only the circuit architecture used for processing the labeled states, but also the (local) observable measured after its application.
    Concretely, different regression tasks may require observables with different dimensionalities of its eigenprojectors.
    If one uses a variational ansatz for transforming the observable, this results in different numbers of qubits one needs to measure, which defines the observable's locality.
    While measuring local observables may be beneficial in variational algorithms \cite{cerezo2025does}, this, as we have shown, may also result in larger prediction variance in regression tasks.

\onecolumngrid

\newpage

\appendix

\section{Predicting the weight in a mixture of states}
\label{app:mixture}

    In this Section, we derive the equations shown in Section~\ref{sec:lin-comb} 
    
    \subsection{The connection between global and local optimization tasks}
     Given two fixed states $\rho^{(1)}$ and $\rho^{(2)}$,  consider a task of predicting the parameter $\alpha \in [0, 1]$ of a density operator 
     \begin{equation}
      \label{eq:lin_al-app}   \rho_{\alpha} = \alpha\rho^{(1)} + (1-\alpha)\rho^{(2)}.
     \end{equation}

     The goal is to find an observable $M$ with the average $\Tr\rho_{\alpha}M = \alpha$ and minimal total variance 
     \begin{equation}
         \int_0^1\,\Delta_{\rho_\alpha}^2 M\,d\alpha,
     \end{equation}
    where
    \begin{equation}\label{eq:varata}
        \Delta_{\rho_\alpha}^2 M = \Tr\rho_{\alpha}M^2 - (\Tr\rho_{\alpha}M)^2.
    \end{equation}
    Formally, this is an optimization task 
    \begin{equation}\label{eq:glob_task-app}
        \begin{split}
            & M^*\in \arg\min_{M}\,\int_0^1\,\Delta_{\rho_\alpha}^2 M\,d\alpha \\
            & \mathrm{s. t.}\quad\Tr\rho^{(1)} M = 1,\quad \Tr\rho^{(2)} M = 0
        \end{split}
    \end{equation}
    with the constraints originating from the equality $\Tr\rho_{\alpha} M = \alpha$.
    
    Let us first consider a local version of the task, namely minimization of variance at a given point $\alpha$:
    \begin{equation}
        \begin{split}
            & M^*\in \arg\min_{M}\,\Delta_{\rho_\alpha}^2 M \\ 
            & \mathrm{s. t.}\quad\Tr\rho^{(1)} M = 1,\quad \Tr\rho^{(2)} M = 0.
        \end{split}
    \end{equation}
    It can be approached with the method of Lagrange multipliers by performing minimization of the  functional
    \begin{equation}\label{eq:Lag_fun}
        F[M,\,\mu,\,\nu] = \Delta_{\rho_\alpha}^2 M + \mu(\Tr\rho^{(1)} M - 1) + \nu\Tr\rho^{(2)} M.
    \end{equation}
    The optimal observable $M^*$ is found by considering a small perturbation $\epsilon$  by an arbitrary Hermitian operator $Y$, $M=M^*+ \epsilon Y$ and plugging it into~(\ref{eq:Lag_fun}). Gathering the terms in front of $\epsilon$ and setting them to zero yields
    \begin{equation}
        \Tr{(\rho_{\alpha}M^* + M^*\rho_{\alpha})Y} - 2\Tr{\rho_{\alpha}M^*}\Tr\rho_{\alpha}Y + \mu_0\Tr\rho^{(1)} Y + \nu_0\Tr\rho^{(2)} Y = 0,
    \end{equation}
    where $\mu_0$ and $\nu_0$ are the optimal values of the Lagrange multipliers $\mu$ and $\nu$, respectively.
    The last equation holds for any Hermitian $Y$, hence the sum of the terms in front of $Y$ can be set to zero:
    \begin{equation}\label{eq:int_op_var}
       \rho_{\alpha}M^* + M^*\rho_{\alpha} - 2\Tr{\rho_{\alpha}M^*}\rho_{\alpha} + \mu_0\rho^{(1)} + \nu_0\rho^{(2)} = 0.
    \end{equation}
    Taking the trace of both parts of (\ref{eq:int_op_var}) and using $\Tr\rho_{\alpha} M^* = \alpha$, one obtains
    \begin{equation}\label{eq:munu}
        \mu_0 + \nu_0 = 0.
    \end{equation}
    After  making use of~(\ref{eq:munu}), equation~(\ref{eq:int_op_var}) takes the form:
    \begin{equation}\label{eq:loc_theory-app}
        \rho_{\alpha}M^* + M^*\rho_{\alpha} = 2\alpha\rho_{\alpha} - \mu_0(\rho^{(1)} - \rho^{(2)}).
    \end{equation}
    Multiplying both parts of~(\ref{eq:loc_theory-app}) by $M^*$ and taking the trace yields:
    \begin{equation}\label{eq:trsq}
        \Tr\rho_{\alpha} (M^*)^2 = -\frac{\mu_0}{2} + \alpha^2,
    \end{equation}
    and hence the connection of variance with the optimal value of the Lagrange multiplier $\mu$ is
    \begin{equation}\label{eq:var_con}
    \Delta_{\rho_\alpha}^2 M^* = -\frac{\mu_0}{2}.
    \end{equation}

    Equation~(\ref{eq:loc_theory-app}) can be solved with a well-known ansatz \cite{sidhu2020geometric} 
    \begin{equation}\label{eq:H_ans}
        M^* = \alpha\Id - \frac{\mu_0}{2}L(\alpha),
    \end{equation}
    where $L$ is the symmetric logarithmic derivative~(SLD) operator satisfying the equation
    \begin{equation}
       \label{eq:sld} \frac12\left(\rho_{\alpha}L + L\rho_{\alpha} \right) = \partial_{\alpha}\rho_{\alpha}.
    \end{equation}
    Substitution of~(\ref{eq:H_ans}) into~(\ref{eq:trsq}) with noting that $\partial_{\alpha}\rho_{\alpha}  = \rho^{(1)} - \rho^{(2)}$ gives the connection
    \begin{equation}\label{eq:qFI_con}
        -\frac{\mu_0}{2} = \frac{1}{I_q(\alpha)},
    \end{equation}
    where
    \begin{equation}\label{eq:qFI}
        I_q(\alpha) = \Tr\rho_{\alpha}L^2,
    \end{equation}
    one of the definitions of the quantum Fisher information \cite{sidhu2020geometric}.

    The SLD operator~(and hence $M^*$) can be found directly from~(\ref{eq:sld}), but, for our purposes, let us express the solution in the spectral decomposition form with eigenprojectors $\tilde{\Lambda}_i$ and eigenvalues $\lambda_i$:
    \begin{equation}
        M^* = \sum_i\,\lambda_i\tilde{\Lambda}_i.
    \end{equation}
    Inserting this representation into~(\ref{eq:loc_theory-app}), multiplying both sides by $\tilde{\Lambda}_i$ and taking the trace, we obtain
    \begin{equation}\label{eq:x_expr}
        \lambda_i = \alpha - \frac{\mu_0\left(\tilde p_i^{(1)} - \tilde p_i^{(2)}\right)}{2\left(\alpha \tilde p_i^{(1)} + (1-\alpha)\tilde p_i^{(2)}\right)},
    \end{equation}
    where 
    \begin{equation}\label{eq:ps_def}
        \tilde p_i^{(1)} = \Tr\tilde\Lambda_i\rho^{(1)},\qquad \tilde p_i^{(2)} = \Tr\tilde\Lambda_i\rho^{(2)}
    \end{equation}
    and we assume that $\alpha \tilde p_i^{(1)} + (1-\alpha)\tilde p_i^{(2)}$ is not zero for each $i$, which holds, for example, if $0<\alpha<1$.
    
    Multiplying both parts of~(\ref{eq:x_expr}) by $\tilde p_i^{(1)}$ and summing these expressions over $i$, with the use of~(\ref{eq:var_con}) we arrive at the expression for variance:
    \begin{equation}
        \label{eq:proj_var_l} 
        \Delta_{\rho_\alpha}^2 M^* = \frac{1-\alpha}{\sum_i\,\frac{(\tilde p_i^{(1)})^2 - \tilde p_i^{(1)}\tilde p_i^{(2)}}{\alpha \tilde p_i^{(1)} + (1-\alpha)\tilde p_i^{(2)}}},
        \qquad
        0<\alpha<1.
    \end{equation}
    Such form of the solution might imply that the optimal variance at point $\alpha$ can  be represented as the result  of minimization of the expression 
    \begin{equation}\label{eq:proj_var_l_n}
        \frac{1-\alpha}{\sum_i\,\frac{(p_i^{(1)})^2 - p_i^{(1)}p_i^{(2)}}{\alpha p_i^{(1)} + (1-\alpha)p_i^{(2)}}}
    \end{equation}
    over the set of orthogonal projectors $\Lambda_i$ such that
    \begin{equation}
     p_i^{(1)} = \Tr\Lambda_i\rho^{(1)},\quad  p_i^{(2)} = \Tr\Lambda_i\rho^{(2)}.
    \end{equation}
    This argument can  be supported by comparing the denominator in~(\ref{eq:proj_var_l_n}) with the classical Fisher information for given projectors $\boldsymbol{\Lambda} = \{\Lambda_i\}_i$.
    The latter, calculated via formula \eqref{eq:cfi}, reads
    \begin{equation}\label{eq:app:cFI_ref}
        I_c\left(\boldsymbol{\Lambda}, \rho_\alpha\right) =\sum_i\,\frac{(p_i^{(1)} - p_i^{(2)})^2}{\alpha p_i^{(1)} + (1-\alpha)p_i^{(2)}}. 
    \end{equation}
    Taking the difference between the denominator of~(\ref{eq:proj_var_l_n})(divided by $1-\alpha$) and $I_c$, we have:
    \begin{multline}\label{eq:var_cfi_eq}
        \frac{1}{1-\alpha}\sum_i\,\frac{(p_i^{(1)})^2 - p_i^{(1)}p_i^{(2)}}{\alpha p_i^{(1)} + (1-\alpha)p_i^{(2)}} -\sum_i\,\frac{(p_i^{(1)} - p_i^{(2)})^2}{\alpha p_i^{(1)} + (1-\alpha)p_i^{(2)}} = \sum_i\,\frac{(p_i^{(1)} - p_i^{(2)})(\frac{p_i^{(1)}}{1-\alpha} - (p_i^{(1)} - p_i^{(2)}))}{\alpha p_i^{(1)} + (1-\alpha)p_i^{(2)}}\\=\frac{1}{1-\alpha}\sum_i\,\frac{(p_i^{(1)} - p_i^{(2)})(\alpha p_i^{(1)} + (1-\alpha)p_i^{(2)})}{\alpha p_i^{(1)} + (1-\alpha)p_i^{(2)}} =\frac{1}{1-\alpha}\sum_i\, (p_i^{(1)} - p_i^{(2)}) = 0.
    \end{multline}
    Therefore, the expression in ~(\ref{eq:proj_var_l_n}) is equal to  $1/I_c\left(\boldsymbol{\Lambda}, \rho_\alpha\right)$.  With the use of this fact equation~(\ref{eq:proj_var_l}), in turn,  can be rewritten as 
    \begin{equation}
        \Delta_{\rho_\alpha}^2 M^* = \frac1{I_c(\tilde{\boldsymbol{\Lambda}}, \rho_\alpha)},\quad0<\alpha<1.
    \end{equation}
    By~(\ref{eq:var_con}) and~(\ref{eq:qFI_con}), this expression also equals $1/I_q(\rho_\alpha)$, and so $I_c(\tilde{\boldsymbol{\Lambda}}, \rho_\alpha)$ attains its maximal possible value, $I_q(\alpha)$, on the projectors $\tilde{\boldsymbol{\Lambda}}=\{\tilde\Lambda_i\}_i$, as it should be.

    \begin{remark}
    One can directly set the task of minimizing the variance over $\lambda_i$'s for given fixed distributions $p_i^{(1)}$ and $p_i^{(2)}$, which come from fixed projectors $\{\Lambda_i\}_i$, such that $p_i^{(1)} = \Tr\Lambda_i\rho^{(1)},\,  p_i^{(2)} = \Tr\Lambda_i\rho^{(2)}$. The constraint $\Tr\rho_\alpha M = \alpha$ implies 
    \begin{equation}
        \sum_i\,p_i^{(1)} \lambda_i = 1,
        \qquad
        \sum_i\,p_i^{(2)} \lambda_i = 0,
    \end{equation}
    and the task is formulated as the minimization of the functional
    \begin{equation}
         \Delta_{\rho_\alpha}^2 M +\mu\left(\sum_i\,p_i^{(1)} \lambda_i-1\right) + \nu\sum_i\,p_i^{(2)} \lambda_i,
    \end{equation}
    where, according to~(\ref{eq:varata}),
    \begin{equation}
        \Delta_{\rho_\alpha}^2 M = \sum_i\,\lambda_i^2\left(\alpha p_i^{(1)} + (1-\alpha)p_i^{(2)}\right) - \left(\sum_i\,\lambda_i\left(\alpha p_i^{(1)} + (1-\alpha)p_i^{(2)}\right)\right)^2.
    \end{equation}
    In particular, in such minimization a connection between the optimal value of $\mu$ and  the variance of the optimal observable reads
    \begin{equation}
        \mu_0 = -2\Delta_{\rho_\alpha}^2 M^*,
    \end{equation}
    which has the same form as in~(\ref{eq:var_con}). The solution
    \begin{equation}
        M^* = \sum_i\,\tilde \lambda_i\Lambda_i
    \end{equation} is given by expressions analogous to~(\ref{eq:x_expr}) and~(\ref{eq:proj_var_l}):
    \begin{equation}\label{eq:remsol}
        \tilde \lambda_i = \alpha + \frac{p_i^{(1)} - p_i^{(2)}}{I_c(\boldsymbol{\Lambda}, \rho_\alpha)\left(\alpha p_i^{(1)} + (1-\alpha) p_i^{(2)}\right)},
        \qquad
        \Delta_{\rho_\alpha}^2 M^* = \frac1{I_c(\boldsymbol{\Lambda}, \rho_\alpha)},
        \qquad
        0<\alpha<1,
    \end{equation}
    with $I_c(\tilde{\boldsymbol{\Lambda}}, \rho_\alpha)$ defined by~(\ref{eq:app:cFI_ref}).
    \end{remark}

    Now let us return to the original global task~(\ref{eq:glob_task}). The total variance over the range of the parameter $\alpha$ can be written as
    \begin{equation}    
    \int_0^1\,\Delta_{\rho_\alpha}^2 H\,d\alpha = \Tr\Tilde{\rho}H^2 - \int_0^1\,(\Tr\rho(\alpha)H)^2\,d\alpha,
    \end{equation}
    where
    \begin{equation}\label{eq:rho_til}
        \Tilde{\rho} = \int_0^1\rho_\alpha d\alpha = \frac12\rho^{(1)} + \frac12\rho^{(2)} = \rho_{1/2},
    \end{equation}
    with $\rho_{1/2}$ being $\rho_\alpha$ taken at $\alpha = 1/2$, as in the main text.
    
    With the Lagrange multipliers method, the task~(\ref{eq:glob_task-app}) is reformulated as the minimization of the functional
    \begin{equation}
        \Tr\Tilde{\rho} M^2 - \int_0^1\,(\Tr\rho_\alpha M)^2\,d\alpha + \mu(\Tr\rho^{(1)} M - 1) + \nu\Tr\rho^{(2)} M.   
    \end{equation}
    Considering the same procedure as for~(\ref{eq:Lag_fun}), one arrives at the operator equation for the optimal observable
    \begin{equation}\label{eq:opt_glob_int}
        \Tilde{\rho} M^* + M^*\Tilde{\rho} - 2\int_0^1\Tr{\rho_\alpha M^*}\rho_\alpha d\alpha + \mu_0(\rho^{(1)}-\rho^{(2)}) = 0.
    \end{equation}
    With the integral on the left side being calculated as
    \begin{equation}    2\int_0^1\Tr{\rho_\alpha M^*}\rho_\alpha d\alpha = 2\int_0^1\alpha\rho_\alpha d\alpha = \frac23\rho^{(1)} + \frac13\rho^{(2)} = \rho_{2/3},
    \end{equation}
    and with the use of~(\ref{eq:rho_til}), equation~(\ref{eq:opt_glob_int}) takes form:
    \begin{equation}
     \label{eq:opt_glob_sint}  
     \rho_{1/2} M^* + M^*\rho_{1/2} - \rho_{2/3} + \mu_0(\rho^{(1)}-\rho^{(2)}) = 0.
    \end{equation}
    Multiplying both parts of this equation  by $H_0$ and taking trace results in
    \begin{equation}
        \Tr\Tilde{\rho} (M^*)^2 = -\frac{\mu_0}{2} + \frac13,
    \end{equation}
    and
    \begin{equation}
        \label{eq:app:tot_var_opt}    \int_0^1\,\Delta_{\rho_\alpha}^2 M\,d\alpha = \Tr\Tilde{\rho}M^2 - \frac13 =  -\frac{\mu_0}{2}.
    \end{equation}
    
    Next, we observe that
    \begin{equation}
        - \rho_{2/3}+ \frac16(\rho^{(1)}-\rho^{(2)}) = -\frac12(\rho^{(1)}+\rho^{(2)}) = -\rho_{1/2},
    \end{equation}
    and hence it is convenient to make the substitution $\mu_0 = \Tilde{\mu}_0 + 1/6$, which transforms~(\ref{eq:opt_glob_sint}) into the final form of the equation for the optimal observable:
    \begin{equation}
        \rho_{1/2} M^* + M^*\rho_{1/2} = \rho_{1/2} - \Tilde{\mu}_0(\rho^{(1)}-\rho^{(2)}).
    \end{equation}
    It can be seen that this equation for the global task coincides with the equation~(\ref{eq:loc_theory-app}) for the local task at $\alpha = 1/2$, hence
    making use of the connection~(\ref{eq:qFI_con}) brings the former into the final form of the equation for the \emph{global} optimal observable:
    \begin{equation}
        \label{eq:opt_glob-app}
        \rho_{1/2} M^* + M^* \rho_{1/2} = \rho_{1/2} + \frac{2}{I_q(\rho_{1/2})}(\rho^{(1)}-\rho^{(2)}).
    \end{equation}
    Accordingly, the solution of~(\ref{eq:opt_glob-app}) is given by~(\ref{eq:H_ans}) with substitution $\alpha = 1/2$:
    \begin{equation}
        M^* = \frac12\Id -\frac{\Tilde{\mu}_0}{2}L\left(1/2\right) = \frac12\Id + \frac{1}{I_q(\rho_{1/2})}L\left(1/2\right).
    \end{equation}
    Finally, the optimal total variance is obtained from~(\ref{eq:app:tot_var_opt})
    \begin{equation}
        \int_0^1\,\Delta_{\rho_\alpha}^2 M^*\,d\alpha  =  -\frac{\mu_0}{2} = -\frac{\Tilde{\mu}_0}{2} - \frac1{12} = \frac{1}{I_q(\rho_{1/2})} - \frac1{12}.
    \end{equation}

    We note that Eq.~(\ref{eq:opt_glob-app}) has the structure of a Lyapunov equation~\cite{Bhatia_1997}
    \begin{equation}\label{eq:app:lyap}
        AX + XA = B
    \end{equation}
    with $A$ equal to $\rho_{1/2}$. 
    A useful property proved in Ref.~\cite{Personick71} states that if $A$ is strictly positive, then the solution $X$ must be Hermitian and unique.

\subsection{Reduction to the $f$-divergence optimization}

    In view of~(\ref{eq:proj_var_l})-(\ref{eq:remsol}), the expression for the optimal total variance can be cast into the forms  dependent on the projectors $\Lambda_i$ via $p_i^{(1,2)} = \Tr\Lambda_i\rho^{(1,2)}$, i.e.,
    \begin{equation}
     \label{eq:cent_var}   \int_0^1\,\Delta_{\rho_\alpha}^2 M^*\,d\alpha  = \frac{1}{4\sum_i\,\frac{(p_i^{(1)})^2 - p_i^{(1)}p_i^{(2)}}{p_i^{(1)} + p_i^{(2)}}} - \frac1{12} = \frac{1}{2\sum_i\,\frac{(p_i^{(1)} - p_i^{(2)})^2}{ p_i^{(1)} + p_i^{(2)}}} - \frac1{12}.
    \end{equation}
    In particular, the setting in which only several qubits are measured corresponds to optimization of the expression in~(\ref{eq:cent_var}) over the projectors $\Lambda_i$ of constrained rank.

    Let us take a closer look at the denominator in the rightmost part of~(\ref{eq:cent_var}). If each $p^{(2)}_i$ is not zero, it can be represented as
    \begin{equation}
      \label{eq:div_tr} \sum_i\,\frac{\left(p_i^{(1)} - p_i^{(2)}\right)^2}{ p_i^{(1)} + p_i^{(2)}} = \sum_i\, p_i^{(2)}\frac{\left(p_i^{(1)}/p_i^{(2)} - 1\right)^2}{p_i^{(1)}/p_i^{(2)} + 1} = \sum_i\,p_i^{(2)} f\left(\frac{p_i^{(1)}}{p_i^{(2)}}\right), 
    \end{equation}
    where
    \begin{equation}
        f(x)  = \frac{(x-1)^2}{x+1}
    \end{equation}
    is a convex function on the positive half of the real axis.
    Due to the convexity of $f$, the expression in the rightmost part of~(\ref{eq:div_tr}) can be viewed as the \emph{$f$-divergence}~\cite{sason2016fdiv} $D_f$ between probability distributions $p^{(1)}$ and $p^{(2)}$:
    \begin{equation}\label{eq:fdiv}
        \infdiv{p^{(1)}}{p^{(2)}} = \sum_i\,p_i^{(2)} f\left(\frac{p_i^{(1)}}{p_i^{(2)}}\right). 
    \end{equation}
    Finding the optimal total variance can then be viewed as the task of maximization of $D_f$ over the set of orthogonal projectors.

\subsection{Example with the depolarizing noise}

    Asin the main text, now we consider an $n$-qubit state with representation~(\ref{eq:lin_al-app}) and
    \begin{equation}\label{eq:app:ex_dep}
        \rho^{(1)} = r\dyad{v_1} + (1-r)\dyad{v_2},
        \qquad
        \rho^{(2)} = \frac1{2^n}\Id,
        \qquad
        0\leqslant r\leqslant 1,
    \end{equation}
    where $\ket{v_1}$ and $\ket{v_2}$ are the eigenvectors of $\rho^{(1)}$, and $\rho^{(2)}$ is the maximally mixed state, which corresponds to the depolarizing noise model. 
    
    In order to obtain the  observable with the lowest possible total variance, we substitute into~(\ref{eq:opt_glob-app}) the ansatz
    \begin{equation}\label{eq:app:ans_gen_lc}
        M^* = \sum_{i=1}^{2^n}\,\lambda_i\dyad{v_i}
    \end{equation}
    with  $\{\ket{v_i}\}_i$ being the eigenvectors of $\rho^{(1)}$. Solving  equation~(\ref{eq:opt_glob}) yields the eigenvalues $\lambda_i$ of $M^*$:
    \begin{align}
        \label{eq:app:eigi}
        \begin{split}
            \lambda_1 &= \frac12 + \frac{2}{I_q(\rho_{1/2})} \frac{r - 2^{-n}}{r + 2^{-n}}, \\
            \lambda_2 &= \frac12 + \frac{2}{I_q(\rho_{1/2})} \frac{(1 - r) - 2^{-n}}{(1 - r) + 2^{-n}}, \\
            \lambda_{i\geqslant 3} &= \frac12 - \frac{2}{I_q(\rho_{1/2})},
        \end{split}
    \end{align}
    where $I_q(\rho_{\alpha})$ is the quantum Fisher information of $\rho_\alpha$.
    The calculation via the formulas involving the eigendecomposition of $\rho_\alpha$ and $\partial_\alpha\rho_\alpha$ \cite{liu2020quantum} gives 
    \begin{equation}
        \label{eq:licomb_qfi_app}
        I_q(\rho_\alpha) = \frac{\alpha DE -2 r (1 - D) + 2^n - 1}{(1 - \alpha) (1 - \alpha D ) (1 - \alpha E )},
    \end{equation}
    where $D = 1 - 2^n (1 + r)$ and $E = 1 - 2^n r$. Its value at $\alpha=1/2$ reads
    \begin{equation}\label{eq:app:qfishlc}
        I_q\left(\rho_{1/2}\right) = 4 - \frac{8(r-1)}{2^n(r-1)-1} - \frac{8r}{2^nr+1}.
    \end{equation}
    By  the argument around~(\ref{eq:app:lyap}), strict positivity of $\rho_{1/2}$ guarantees the uniqueness of the solution of~(\ref{eq:opt_glob-app}) given by~(\ref{eq:app:ans_gen_lc})-(\ref{eq:app:qfishlc}).
    The variance of the observable~(\ref{eq:app:ans_gen_lc}) at each point $\alpha$ can be obtained with the use of~(\ref{eq:app:eigi}) and~(\ref{eq:app:qfishlc}):
    \begin{equation}
        \label{eq:app:obs_opt-n}
        \Delta^2_{\rho_\alpha} M^* = (1 - \alpha) \alpha + \frac{(2 \alpha - 1) (1 - 2^n A) A}{B^2} \\ + \frac{2 (2 + 2^n) C - a \big(1 + 2 (4 + 2^n) C\big)}{B},
    \end{equation}
    where
    \begin{equation}
        A = (1 - 2 r)^2, \quad
        B = 1 - 2^n + 2^n (2^n - 4) (r - 1) r, \quad
        C = r(r - 1).
    \end{equation}

    Now suppose that $m<n$  qubits are being measured. 
    In such a setting, the observable of interest is constructed on orthogonal projectors each having the rank $2^{n-m}$. 
    In view of~(\ref{eq:cent_var})-(\ref{eq:fdiv}), finding the  observable with the smallest total variance in this case reduces to optimization of the $f$-divergence over the set of orthogonal projectors with constrained rank. This can be done analytically due to  proportionality of $\rho^{(2)}$ in~(\ref{eq:app:ex_dep}) to the identity operator and the following properties.
    
    Recall that an $n\times n$ real matrix $T$ is called \emph{stochastic} if it has non-negative entries and $\sum_i\,T_{ij} = 1$ for any $j=1,\,\ldots,\,n$ (i.e., each row sums to unity).
    
    \begin{lemma}[\cite{sagawa2020div}]\label{th:div_ineq}
        Let $f$ be a convex function and $p$, $q$, $p'$, $q'\in\mathbb{R}^d$.  Let all the components of $q$, $q'$ be positive. If $p' = Tp$ and $q' = Tq$ for a stochastic matrix $T$, then
        \begin{equation}
            \sum_i\,q'_i\, f\left(\frac{p'_i}{q'_i}\right)\leqslant \sum_i\,q_i \,f\left(\frac{p_i}{q_i}\right).
        \end{equation}
    \end{lemma}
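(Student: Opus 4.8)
The plan is to reduce the entire statement to a single application of Jensen's inequality, following the standard route for proving data-processing inequalities for $f$-divergences. The key observation is that, because $T$ is stochastic, each ratio $p'_i/q'_i$ is itself a convex combination of the ratios $p_j/q_j$, with weights manufactured out of $T$ and $q$.

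Concretely, first I would write out $p'_i = \sum_j T_{ij} p_j$ and $q'_i = \sum_j T_{ij} q_j$ and introduce the weights
\begin{equation}
    w_{ij} = \frac{T_{ij}\, q_j}{q'_i}.
\end{equation}
Since $T_{ij} \geqslant 0$ and all $q_j > 0$, these are nonnegative, and by construction $\sum_j w_{ij} = q'_i/q'_i = 1$, so for each fixed $i$ the family $\{w_{ij}\}_j$ is a genuine probability distribution. The point is then that
\begin{equation}
    \frac{p'_i}{q'_i} = \frac{\sum_j T_{ij}\, q_j\,(p_j/q_j)}{q'_i} = \sum_j w_{ij}\,\frac{p_j}{q_j},
\end{equation}
i.e.\ the argument of $f$ on the left-hand side is a weighted average of the arguments appearing on the right-hand side.

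Next I would apply Jensen's inequality to the convex function $f$, obtaining for each $i$
\begin{equation}
    f\!\left(\frac{p'_i}{q'_i}\right) = f\!\left(\sum_j w_{ij}\,\frac{p_j}{q_j}\right) \leqslant \sum_j w_{ij}\, f\!\left(\frac{p_j}{q_j}\right).
\end{equation}
Multiplying through by $q'_i$ cancels the denominator hidden inside $w_{ij}$ and gives $q'_i\, f(p'_i/q'_i) \leqslant \sum_j T_{ij}\, q_j\, f(p_j/q_j)$. Summing this over $i$ and exchanging the order of summation leaves $\sum_j \big(\sum_i T_{ij}\big)\, q_j\, f(p_j/q_j)$, at which point the normalization $\sum_i T_{ij} = 1$ collapses the inner sum to unity and produces exactly $\sum_j q_j\, f(p_j/q_j)$, which is the claim.

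I do not expect any substantial obstacle: the whole content is the single Jensen step, and the only points demanding care are bookkeeping. Specifically, one must verify that the $w_{ij}$ are normalized, which relies on the definition $q'_i = \sum_j T_{ij} q_j$ together with $q'_i > 0$ (guaranteed by the hypothesis that all components of $q'$ are positive so that the division is legitimate), and one must invoke $\sum_i T_{ij} = 1$ exactly once at the very end. Positivity of both $q$ and $q'$ is precisely what keeps every ratio and every weight well defined throughout the argument.
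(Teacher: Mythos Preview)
Your argument is correct and is exactly the standard proof of the data-processing inequality for $f$-divergences via Jensen. The paper itself does not prove this lemma; it merely states it with a citation to \cite{sagawa2020div}, so there is nothing to compare against beyond noting that your proof is the textbook one.

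One small point of care worth flagging: the paper's convention for ``stochastic'' is $\sum_i T_{ij}=1$ for each $j$ (columns sum to unity, despite the misleading parenthetical ``each row sums to unity'' immediately after). This is the normalization you correctly invoke in your final step. The normalization $\sum_j w_{ij}=1$ used for Jensen, on the other hand, does not require stochasticity of $T$ at all---it follows purely from the definition $q'_i=\sum_j T_{ij}q_j$---so your bookkeeping is exactly right.
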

    
    Let $p,\,p'\in\mathbb{R}^d$ and $p_i^{\downarrow}$, $p'^{\downarrow}_i$, $i = 1,\,\ldots,\,d$ are their components arranged in descending order.
    Vector $p'$ is said to be \emph{majorized} by vector $p$, written as $p'\prec p$, if the following inequalities hold
    \begin{equation}\label{app:eq:maj}
        \sum_{i=1}^k\,p'^{\downarrow}_i\leqslant\sum_{i=1}^k\,p^{\downarrow}_i,\quad k=1,\,\ldots,\,d,
    \end{equation}
    with the last inequality~(at $k = d$) holding as equality.

    An $n\times n$ matrix $D$ is called \emph{doubly stochastic} if it is stochastic and, additionally, $\sum_j\,D_{ij} = 1$ for any $i=1,\,\ldots,\,n$ (i.e., each row and column sums to unity).
    
    \begin{theorem}[\cite{Bhatia_1997}]
        Let $p,\,p'\in\mathbb{R}^d$. The following conditions are equivalent:
        \begin{enumerate}
            \item $p'\prec p$.
            \item 
            There exists a doubly stochastic matrix $D$ such that $p' = Dp$.
        \end{enumerate}
    \end{theorem}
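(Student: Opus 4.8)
The plan is to establish the two implications separately, treating $(2)\Rightarrow(1)$ as the routine direction and $(1)\Rightarrow(2)$ as the substantive one. For $(2)\Rightarrow(1)$, I would argue directly from the definition of a doubly stochastic matrix, without invoking the Birkhoff--von Neumann decomposition. For each $k$, introduce the sum-of-$k$-largest functional
\begin{equation}
    \Phi_k(x) = \max\Big\{ \textstyle\sum_{j=1}^d w_j x_j \;:\; 0\leqslant w_j\leqslant 1,\ \textstyle\sum_{j=1}^d w_j = k \Big\},
\end{equation}
which coincides with $\sum_{i=1}^k x^{\downarrow}_i$. Given $p' = Dp$, I would rewrite $\Phi_k(p') = \max_w \sum_j v_j p_j$ with $v_j = \sum_i w_i D_{ij}$, and check that $v$ again lies in the same feasible set, i.e., $0\leqslant v_j\leqslant 1$ and $\sum_j v_j = k$; this uses that the columns of $D$ sum to one (for $v_j\leqslant 1$) and the rows sum to one (for $\sum_j v_j = k$). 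Hence $\Phi_k(p')\leqslant\Phi_k(p)$, which is exactly~(\ref{app:eq:maj}). The equality at $k=d$ follows from $\sum_i p'_i = \sum_j p_j\big(\sum_i D_{ij}\big) = \sum_j p_j$ by the column-sum condition, giving $p'\prec p$.

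For $(1)\Rightarrow(2)$, I would use the classical construction via a finite sequence of $T$-transforms. Assume without loss of generality that $p$ and $p'$ are both sorted in descending order and that $p'\neq p$. A $T$-transform is a doubly stochastic matrix of the form $T = \lambda I + (1-\lambda)Q_{ij}$, $\lambda\in[0,1]$, where $Q_{ij}$ transposes coordinates $i$ and $j$; it averages only the two entries $x_i$ and $x_j$ and leaves the rest fixed. The key lemma I would prove is that if $p'\prec p$ and $p'\neq p$, then one can select a pair of indices $i<j$ --- one at which the running vector exceeds the target $p'$ and one at which it falls short --- together with a mixing parameter $\lambda$, so that $\hat p = Tp$ still satisfies $p'\prec\hat p\prec p$ while agreeing with $p'$ in strictly more coordinates. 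Iterating at most $d-1$ times produces $p'$ exactly, and the composite $D$ of these $T$-transforms is doubly stochastic, since the product of doubly stochastic matrices is again doubly stochastic.

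The main obstacle is the inductive step in the second direction: verifying that the chosen $T$-transform both preserves the majorization relation $p'\prec\hat p$ and strictly decreases the number of coordinates in which the running vector differs from $p'$. This requires careful bookkeeping of how the $k$-partial sums change as mass is transferred between the two selected coordinates, and choosing $\lambda$ so that one of the two entries is driven exactly onto its target value in $p'$ while no inequality in~(\ref{app:eq:maj}) is violated along the way. Once this lemma is in place, both implications close and the stated equivalence follows.
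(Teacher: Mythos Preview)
Your proposal is correct and follows the standard textbook argument for the Hardy--Littlewood--P\'olya equivalence: the variational characterization of $\sum_{i=1}^k x^{\downarrow}_i$ for $(2)\Rightarrow(1)$, and the $T$-transform construction for $(1)\Rightarrow(2)$. Note, however, that the paper does not supply its own proof of this statement; it is merely quoted with a citation to Bhatia, so there is nothing in the paper to compare your argument against --- your sketch is essentially the proof one finds in that reference.
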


    \begin{theorem}[Schur's Theorem~\cite{Bhatia_1997}]\label{th:schur}
        Let $\mathrm{diag}(A)$ denote the vector whose components are the diagonal entries of a Hermitian matrix $A$ and $\lambda(A)$ the vector whose components are the eigenvalues of $A$ specified in any order. Then the two vectors are in majorization relation
        \begin{equation}
            \mathrm{diag}(A)\prec\lambda(A). 
        \end{equation}
    \end{theorem}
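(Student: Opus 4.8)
The plan is to combine the spectral theorem with the immediately preceding characterization of majorization in terms of doubly stochastic matrices. Since $A$ is Hermitian, I would first write its spectral decomposition $A = U\Lambda U^\dagger$, where $U$ is unitary and $\Lambda = \mathrm{diag}(\lambda_1,\ldots,\lambda_d)$ collects the eigenvalues on its diagonal. Reading off the diagonal entries then gives $a_{ii} = \sum_{j=1}^d |U_{ij}|^2\lambda_j$, so the vector $\mathrm{diag}(A)$ is obtained from $\lambda(A)$ by a single linear map whose matrix has entries $D_{ij} = |U_{ij}|^2$.

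The key step is to observe that this matrix $D$ is doubly stochastic. Its entries are manifestly non-negative, and the unitarity relations $\sum_j |U_{ij}|^2 = 1$ (each row of $U$ has unit norm) and $\sum_i |U_{ij}|^2 = 1$ (each column of $U$ has unit norm) show that every row and every column of $D$ sums to unity. Hence $\mathrm{diag}(A) = D\,\lambda(A)$ with $D$ doubly stochastic.

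Finally, I would invoke the preceding theorem, which states that $p' = Dp$ for a doubly stochastic $D$ is equivalent to $p'\prec p$. Applying it with $p = \lambda(A)$ and $p' = \mathrm{diag}(A)$ immediately yields $\mathrm{diag}(A)\prec\lambda(A)$, as claimed. The conclusion is insensitive to the ordering of the eigenvalues in $\lambda(A)$, since permuting them merely permutes the columns of $D$ and preserves double stochasticity.

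There is no genuine analytic obstacle here: the entire content is the recognition that the entrywise squared-modulus matrix of a unitary is doubly stochastic, after which the preceding theorem does all the work. The only point requiring mild care is to check that the majorization conventions fixed in the statement (descending rearrangement, with equality of the full sums at $k=d$) are met, which they are because $\sum_i a_{ii} = \Tr A = \sum_j \lambda_j$ guarantees the terminal equality automatically.
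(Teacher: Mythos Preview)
Your argument is correct and is precisely the standard proof of Schur's theorem: spectrally decompose $A=U\Lambda U^\dagger$, note that $a_{ii}=\sum_j |U_{ij}|^2\lambda_j$, observe that $D_{ij}=|U_{ij}|^2$ is doubly stochastic by unitarity of $U$, and invoke the preceding equivalence between majorization and doubly stochastic maps. The paper itself does not supply a proof of this theorem---it merely states it and cites Bhatia---so there is nothing to compare your approach against; your proof is complete and self-contained.
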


    Now let $\{\ket{v_i}\}_i$ be the eigenvectors of $\rho^{(1)}$ corresponding to eigenvalues $\lambda^{\downarrow}_i(\rho^{(1)})$, in descending order.
    As we measure $m<n$ qubits, we need to choose rank $2^{n-m}$ projectors which maximize the $f$-divergence~(\ref{eq:fdiv}). One can see that it is sufficient to choose the following projectors:
    \begin{gather}\label{eq:app:true_proj}
      \Lambda_1 = \sum_{i=1}^{2^{n-m}} \dyad{v_i}, 
      \quad \ldots, \quad
      \Lambda_k = \sum_{i = (k-1)\,2^{n-m}+1}^{k\,2^{n-m}}\dyad{v_i},
      \quad \ldots, \quad 
      \Lambda_{2^m} = \sum_{i=2^n-2^{n-m} +1}^{2^{n}} \dyad{v_{i}}. 
    \end{gather}
    \begin{lemma}
        The $f$-divergence~(\ref{eq:fdiv}) attains its maximum  on the projectors $\{\Lambda_i\}_i$.
    \end{lemma}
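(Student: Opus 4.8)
The plan is to exploit the fact that $\rho^{(2)}\propto\Id$ makes the reference distribution flat, reducing the optimization to a majorization statement that feeds directly into the data-processing Lemma~\ref{th:div_ineq}. First I would note that for \emph{any} orthogonal projectors $\{\Lambda_i\}_{i=1}^{2^m}$ of rank $2^{n-m}$ one has $p_i^{(2)}=\Tr\Lambda_i\rho^{(2)}=2^{-n}\Tr\Lambda_i=2^{-m}$, so $p^{(2)}$ is the uniform vector independently of the projectors chosen. Denoting by $q^{(1)},q^{(2)}$ the distributions produced by the candidate projectors~\eqref{eq:app:true_proj}, both $p^{(2)}$ and $q^{(2)}$ coincide with this uniform vector. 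It therefore suffices to prove that $q^{(1)}$ majorizes every achievable $p^{(1)}$, that is $p^{(1)}\prec q^{(1)}$: granting this, the theorem relating majorization to doubly stochastic maps supplies a doubly stochastic $D$ with $p^{(1)}=Dq^{(1)}$, and since a doubly stochastic matrix fixes the uniform vector we also get $p^{(2)}=Dq^{(2)}$. Applying Lemma~\ref{th:div_ineq} with $T=D$ (a doubly stochastic matrix is stochastic, and the positive reference $q^{(2)}$ meets the hypotheses) then yields $\infdiv{p^{(1)}}{p^{(2)}}\leqslant\infdiv{q^{(1)}}{q^{(2)}}$, which is exactly the asserted maximality.

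The main step is establishing $p^{(1)}\prec q^{(1)}$. For this I would fix an orthonormal basis $\{\ket{e_{i,s}}\}$ adapted to the measurement, so that $\Lambda_i=\sum_{s=1}^{2^{n-m}}\dyad{e_{i,s}}$, and read off $p_i^{(1)}=\sum_{s}\langle e_{i,s}|\rho^{(1)}|e_{i,s}\rangle$ as a sum of $2^{n-m}$ of the diagonal entries of $\rho^{(1)}$ in this basis. Schur's Theorem~\ref{th:schur} tells us that the vector of these diagonal entries is majorized by the vector of eigenvalues $\lambda(\rho^{(1)})$. The candidate projectors~\eqref{eq:app:true_proj} instead bundle the eigenvectors into blocks of size $2^{n-m}$ taken in order of decreasing eigenvalue, so that $q_i^{(1)}$ is the sum of the $i$th such block. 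Comparing partial sums, the sum of the $\ell$ largest $p_i^{(1)}$ is a sum of $\ell\,2^{n-m}$ diagonal entries, hence is bounded by the sum of the $\ell\,2^{n-m}$ largest eigenvalues of $\rho^{(1)}$, which is precisely $\sum_{i=1}^{\ell}q_i^{(1)}$; at $\ell=2^m$ both sides equal $\Tr\rho^{(1)}=1$. These are the defining inequalities~\eqref{app:eq:maj}, so $p^{(1)}\prec q^{(1)}$. (The same bound is just the Ky Fan maximum principle for the rank-$\ell\,2^{n-m}$ projector $\sum_{i}\Lambda_i$, which one could use in place of Schur's Theorem.)

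I expect the majorization step to be the only real obstacle, and within it the delicate point is the partial-sum comparison for every $\ell$: one must use that picking the $\ell$ largest block sums selects exactly $\ell\,2^{n-m}$ diagonal entries before passing from diagonal entries to eigenvalues via Schur's Theorem. Once $p^{(1)}\prec q^{(1)}$ is in hand the argument closes mechanically, the uniformity of $p^{(2)}=q^{(2)}$ being what lets the doubly stochastic matrix double as the stochastic map required by Lemma~\ref{th:div_ineq}. It also remains to check that the candidate projectors are well defined, which holds since $m<n$ forces the top block to have rank $2^{n-m}\geqslant 2$; for the rank-two $\rho^{(1)}$ of~\eqref{eq:app:ex_dep} this top block already absorbs both nonzero eigenvectors, so $q^{(1)}=(1,0,\dots,0)$ and the optimum is manifestly $r$-independent, in agreement with the main text.
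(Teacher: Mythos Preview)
Your proposal is correct and follows essentially the same route as the paper: establish the majorization $p^{(1)}\prec q^{(1)}$ via Schur's Theorem applied to the diagonal entries of $\rho^{(1)}$ in the measurement basis, then feed this into Lemma~\ref{th:div_ineq} using the doubly stochastic matrix furnished by the majorization--doubly-stochastic equivalence. Your treatment of the second distribution is in fact slightly cleaner than the paper's, since you make explicit that the uniform $p^{(2)}=q^{(2)}$ is automatically preserved by the \emph{same} doubly stochastic $D$, which is exactly what Lemma~\ref{th:div_ineq} requires; the paper's phrasing ``the same holds for $p'^{(2)}$ and $p^{(2)}$'' leaves this coupling implicit.
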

\begin{proof}
For a distribution $p'^{(1)}_i = \Tr\Lambda'_i\rho^{(1)}$ originating from any  rank $2^{n-m}$ projectors $\{\Lambda_i'\}_i$, consider the sum in~(\ref{app:eq:maj}):
\begin{equation}\label{app:eq:maj_ch1}
    \sum_{i=1}^k\,p'^{(1)}_i 
    = \sum_{i=1}^k\,\Tr{\Lambda'_i\rho^{(1)}} 
    = \sum_{i=1}^k\,\sum_{j=(i-1)\,2^{n-m}+1}^{i\,2^{n-m}}\,\langle v'_j|\rho^{(1)}|v'_j\rangle 
    = \sum_{j=1}^{k\,2^{n-m}}\,(\rho^{(1)})'_{jj}, 
\end{equation}
where $p'^{(1)}_i$ are supposed to be arranged in descending order and the projectors $\{\Lambda_i'\}_i$ are constructed on orthonormal vectors $\{\ket{v'_j}\}_j$. These vectors define the diagonal elements of the density operator $(\rho^{(1)})'_{jj} \equiv\langle v'_j|\rho^{(1)}|v'_j\rangle$ in the rightmost part of the last equation.  Denoting also $(\rho^{(1)})_{jj} \equiv \langle v_j|\rho^{(1)}|v_j\rangle$, we have
\begin{equation}\label{app:eq:maj_ch2}
    \sum_{j=1}^{k\,2^{n-m}}\,(\rho^{(1)})'_{jj}\leqslant\sum_{j=1}^{k\,2^{n-m}}\,[(\rho^{(1)})'_{jj}]^{\downarrow}\leqslant\sum_{j=1}^{k\,2^{n-m}}\,[(\rho^{(1)})_{jj}]^{\downarrow}
    = \sum_{i=1}^k\,\sum_{j=(i-1)\,2^{n-m}+1}^{i\,2^{n-m}}\,\langle v_j|\rho^{(1)}|v_j\rangle = \sum_{i=1}^k\,\Tr{\Lambda_i\rho^{(1)}} = \sum_{i=1}^k\,p^{(1)}_i,
\end{equation}
where the second inequality is due to Theorem~\ref{th:schur}, since $\{(\rho^{(1)})_{jj}\}_j$ are the eigenvalues of $\rho^{(1)}$. 

From~(\ref{app:eq:maj_ch1}) and~(\ref{app:eq:maj_ch2}) it follows that the distribution $\{p'^{(1)}_i\}_i$ is majorized by $\{p^{(1)}_i\}_i$ in accordance with condition~(\ref{app:eq:maj}). 
The same holds for $p'^{(2)}_i = \Tr\Lambda'_i\rho^{(2)}$ and  $p^{(2)}_i = \Tr\Lambda_i\rho^{(2)}$ because $\rho^{(2)}$ is proportional to the identity. 
By Lemma~\ref{th:div_ineq}, among all rank $2^{n-m}$ projectors, the divergence~(\ref{eq:fdiv}) assumes its maximal value on the projectors $\{\Lambda_i\}_{i=1}^{2^m}$ defined in~(\ref{eq:app:true_proj}).\end{proof}

Because of the connection between the global and the local tasks described above, the eigenvalues of the optimal observable are calculated via~(\ref{eq:remsol}) with $\alpha$ set to $1/2$, and we denote them here as $\lambda_i$.
The calculation yields
\begin{align}
            \label{eq:app:lambdas_opt_m}
            \begin{split}
                \lambda_{1 \leqslant i\leqslant 2^{n-m}}^{(m)} &= \frac12 + \frac{2}{I_c(\boldsymbol{\Lambda}, \rho_{1/2})} \frac{1 - 2^{-m}}{1 + 2^{-m}} = 1, \\
                \lambda_{i > 2^{n-m}}^{(m)} &= \frac12 - \frac{2}{I_c(\boldsymbol{\Lambda}, \rho_{1/2})} = \frac{1}{1-2^m},
            \end{split}
        \end{align}
where $I_c(\boldsymbol{\Lambda}, \rho_{1/2})$  is the classical Fisher information~(\ref{eq:app:cFI_ref}) calculated on the projectors $\boldsymbol{\Lambda}=\{\Lambda_i\}_{i=1}^{2^m}$ of~(\ref{eq:app:true_proj}).
The variance of the optimal observable
\begin{equation}
    M^*_m = \sum_{i=1}^{2^m}\,\lambda_i^{(m)}\,\Lambda_i
\end{equation}
at each point $\alpha$ takes form
\begin{equation}
    \label{eq:app:obs_opt-m}
    \Delta^2_{\rho_\alpha} M^*_m = (1-\alpha)\left( \frac{1}{2^m - 1} + \alpha \right).
\end{equation}
We stress that there is  no dependence on the parameter $r$  of the model~(\ref{eq:app:ex_dep}).

\section{Variational ans\"atze}
\label{app:sec:ansatzes}

    In this section, we describe and depict the ans\"atze used in this work.

    \subsection{Hardware-efficient ansatz}
    \label{app:hea}
    
        The first ansatz we use in our work is the hardware-efficient ansatz (HEA) \cite{kandala_hardware-efficient_2017}.
        This ansatz alternates between single-qubit rotations which are commonly considered to be easily implementable on contemporary hardware, and multi-qubit operations capable of entangling the qubits.
        The more layers $l$ this ansatz has, the more expressive it is. 
        
        In Fig.~\ref{fig:hea}, shown is an instance of HEA having $l=2$ layers with the entangling operation being a ladder of $ZZ$-rotations.
        Alongside with other cases, this ansatz is used for numerical experiments described in Section~\ref{sec:lin_comb-numerics}.
        As we study the performance of the parameter prediction with the observable \eqref{eq:obs-par} with different numbers of measured qubits, the value $m\in\{1,3,5\}$ is also indicated in the figure. 

        This ansatz is also used for numerical experiments with the Ising Hamiltonian described in Section~\ref{sec:ising-numerics} and Appendix~\ref{app:sec:ising-numerics}, as well as the Schwinger Hamiltonian studied in Appendix~\ref{app:schwinger-numerics}.

        \begin{figure}[h]
            \centering
            \includegraphics[width=0.9\linewidth]{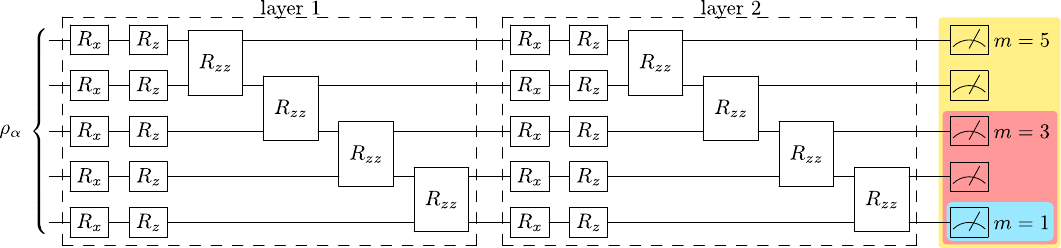}
            \caption{An instance of two-layered HEA for $n=5$ qubits and $m\in\{1,3,5\}$ measured qubits.
            The operators in the blocks are Pauli rotations $R_\sigma(\theta_j) = e^{-i\theta_j\sigma}$ with $\sigma\in\{X,Z,ZZ\}$ being a Pauli string and the rotation angles $\theta_j$ omitted.
            }
            \label{fig:hea}
        \end{figure}

    \subsection{Quantum convolutional neural networks}
    \label{app:qcnn}

        Another ansatz we consider in our work is the quantum convolutional neural network (QCNN) \cite{cong2019quantum}.
        In this ansatz, one alternates between convolutional layers and pooling layers.  
        The former connects the neighboring qubits with two-qubit blocks, and the latter reduces the system size (commonly, by half) by tracing out qubits.
        Within each layer, the parameters in the convolutional and pooling operators are usually kept the same across the blocks.
        A QCNN used in this work is shown in Fig.~\ref{fig:qcnn-ising-schwinger}.
        This ansatz is applied for the Ising Hamiltonian in Appendix~\ref{app:sec:ising-numerics}.
        It is also used for the Schwinger Hamiltonian in Appendix~\ref{app:schwinger-numerics}, but with removing the convolutional blocks between the first and the last qubits, as it was done in \cite{nagano2023quantum}.
        In both cases, we study the performance of QCNN with $m \in \{1, 2\}$ measured qubits, as indicated in the figyre.

        \begin{figure}[h]
            \centering
            \includegraphics[width=0.425\linewidth]{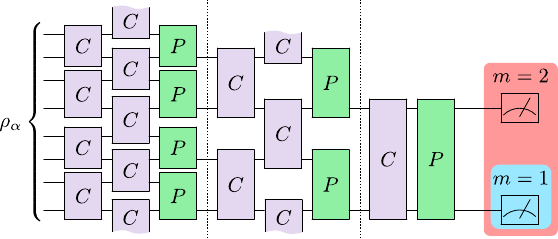}
            \hspace{0.75cm}
            \includegraphics[width=0.495\linewidth]{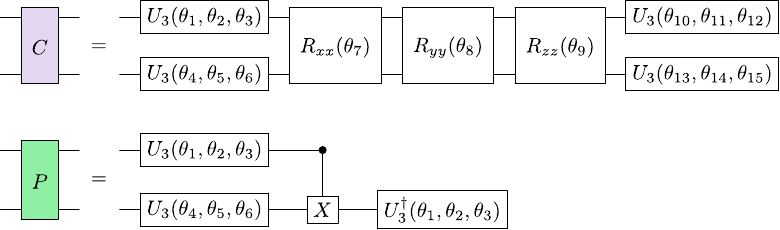}
            \caption{
                Left: Quantum convolutional neural network (QCNN) used in this work, with convolutional and pooling blocks denoted as $C$ and $P$, respectivelly; note that there are convolutional blocks connecting the first and the last qubits within each layer.
                Right: Representation of the blocks in terms of quantum gates adapted from \cite{nagano2023quantum}; here, $U_3$ are universal single-qubit rotations, and $R_\sigma(\theta_j) = e^{-i\theta_j\sigma}$ is a two-qubit rotation with  $\sigma\in\{XX,YY,ZZ\}$.
                QCNN of this form is used for the Ising Hamiltonian in Appendix~\ref{app:sec:ising-numerics}, and Schwinger Hamiltonian in Appendix~\ref{app:schwinger-numerics}; for the latter, the convolutional blocks $C$ between the first and the last qubits are removed (except for the last one, before the measurement). 
            }
            \label{fig:qcnn-ising-schwinger}
        \end{figure}

        In our work, we also use the QCNN depicted in Fig.~\ref{fig:qcnn-cluster}.
        This QCNN is taken from \cite{umeano2023can} where it was designed for classification of the ground states of the Hamiltonian we consider in Appendix~\ref{app:cluster-numerics}.
        We study the performance of this ansatz with $m \in\{1,3\}$ measured qubits.

        \begin{figure}[h]
            \centering
            \includegraphics[width=0.985\linewidth]{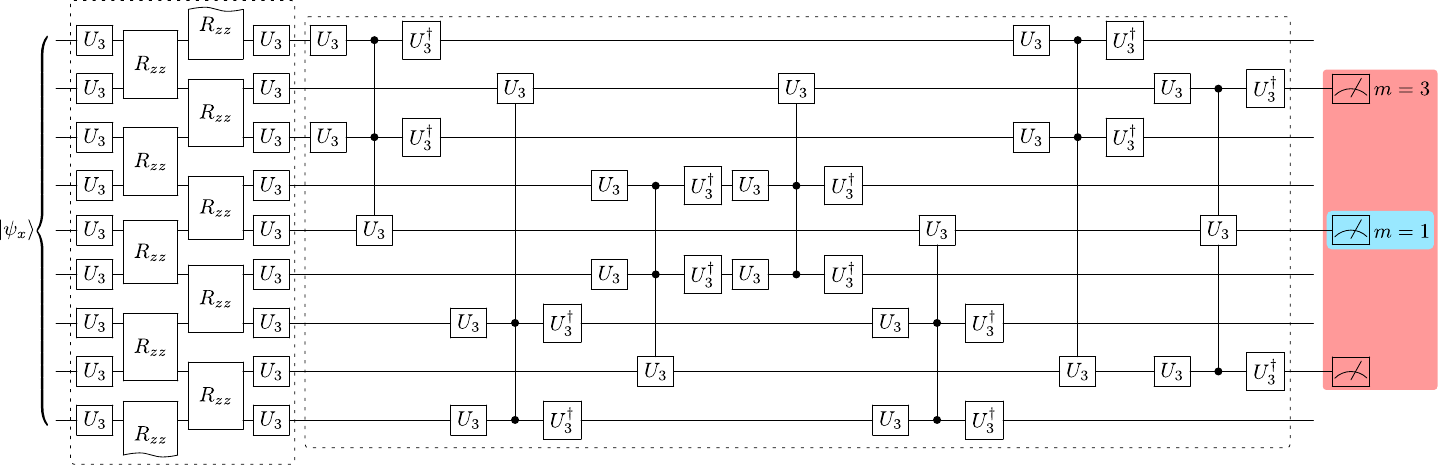}
            \caption{
            A QCNN taken from \cite{nagano2023quantum} and used in Section~\ref{app:cluster-numerics}.
            The gates are the  same as in Figure~\ref{fig:qcnn-ising-schwinger}.
            }
            \label{fig:qcnn-cluster}
        \end{figure}

    \subsection{Hamiltonian variational ansatz}
    \label{app:hva}

    Finally, in this work we also make use of the Hamiltonian variational ansatz (HVA) \cite{wiersema2020exploring}.
    Given a problem Hamiltonian as $H=\sum_q h_q H_q$ with $h_i$ real and $H_q$ Hermitian, a HVA consists of gates of the form $e^{-i \theta H_q}$.
    We apply this ansatz to the cluster Hamiltonian we study in Section~\ref{app:cluster-numerics}.
    An $l$-layered HVA for this Hamiltonian therefore becomes
    \begin{equation}
        \label{app:eq:hva}
        U_{\boldsymbol{\theta}} = \prod_{k=1}^{l} \left[ \exp\left(-i\theta_{k,3} \sum_{j=1}^n Z_j X_{j+1} Z_{j+2}\right) \exp\left(-i\theta_{k,2} \sum_{j=1}^n Z_j\right) \exp\left(-i\theta_{k,1} \sum_{j=1}^n X_j\right) \right]
    \end{equation}

\section{Additional numerical results}
\label{app:additional_numerics}

    This Section contains additional numerical results of solving regression problems for pure states.
    In these problems, we want to learn to predict the label $\alpha$ of a labeled state $\ket{\psi_\alpha}$ being the ground state of a Hamiltonian $H_\alpha$ parametrized by $\alpha$.
                
    \subsection{Transverse field Ising Hamiltonian}
    \label{app:sec:ising-numerics}

        In this Section, as in the main text, we consider the transverse field Ising Hamiltonian:
        \begin{equation}
            \label{app:eq:ising_ham}
            H_h = \sum\limits_{i=1}^n \left( Z_i Z_{i+1} + h X_i \right),
        \end{equation}
        where now we set the number of qubits $n=8$.
        Let $\ket{\psi_h}$ be the ground state of $H_h$.
        We want to learn to predict the transverse field $h$ given a state $\ket{\psi_h}$.
        Our task therefore is to train an observable $M$ giving the expectation $\langle M\rangle_{\psi_h} = h + b_h$ with small prediction bias $b_h$ and presumably low variance $\Delta^2_{\psi_h} M$.
        For this, as in the main text, we generate a training set $\mathcal{T}=\big\{\big(|\psi_{h_i}\rangle, h_i\big)\big\}_{i=1}^{10}$ and numerically solve \eqref{eq:ls_min}.

        We test two ans\"atze $U_{\boldsymbol{\theta}}$ for our task.
        First one is the QCNN described in Appendix~\ref{app:qcnn}, for which we consider the case of $m \in \{1, 2\}$ measured qubits in \eqref{eq:obs-par}.
        The second ansatz is the HEA we used in the main text, for which we measure $m=2$ qubits.

        The results of our numerical experiments are shown in Fig.~\ref{fig:ising}.
        By increasing the number of measured qubits $m$, we decrease both the prediction error and the variance. 
        We also observe that when measuring $m=2$ qubits, HEA performs better than QCNN, which may be due to the higher expressivity of the former.
        However, in some cases this ansatz is known to be prone to the phenomenon of vanishing gradients known as barren plateau \cite{leone2024practical}.

        \begin{figure*}
            \centering
            \includegraphics[width=0.475\linewidth]{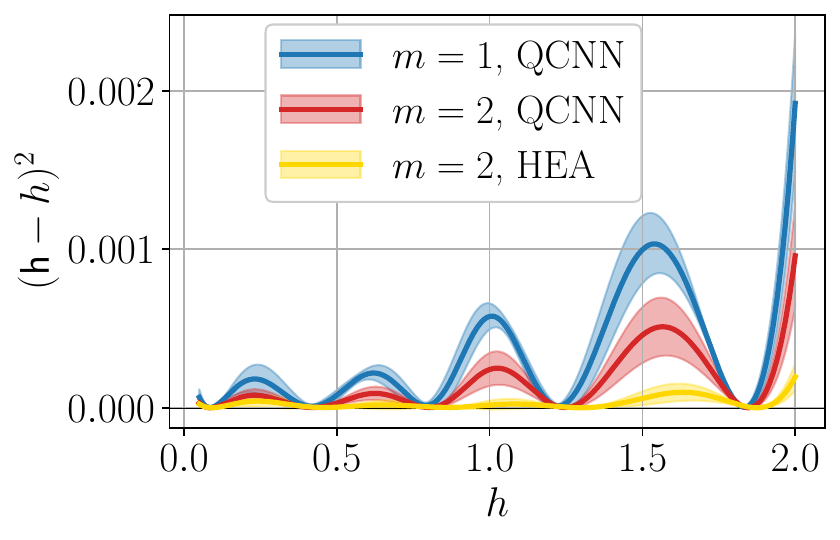}
            \includegraphics[width=0.475\linewidth]{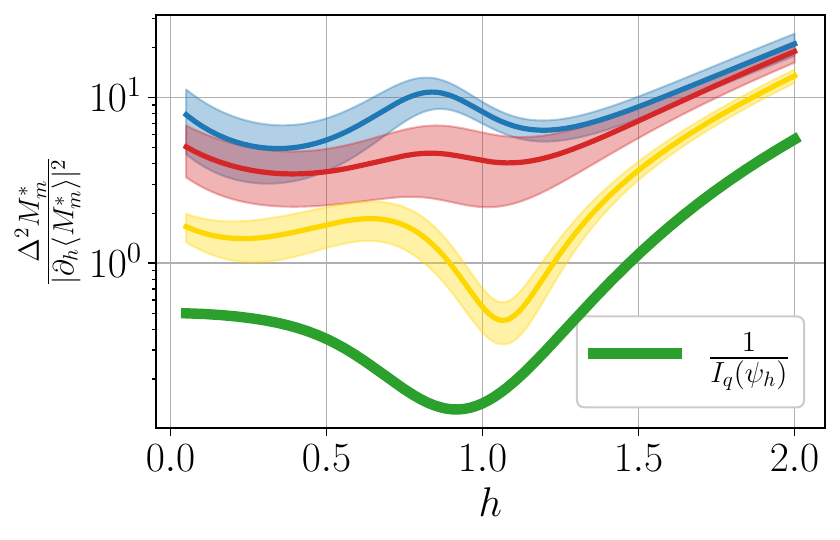}
            \caption{
                Squared difference between the prediction $\mathsf{h}=\langle M^*_m\rangle_{\psi_h}$ and the true transverse field $h$
                (left), and the variance of the optimized observable (right) obtained via numerically solving \eqref{eq:ls_min}.
                The training set is $\mathcal{T}=\big\{\big(|\psi_{h_i}\rangle, h_i\big)\big\}_{i=1}^{10}$ with $\ket{\psi_{h_i}}$ being the ground state of \eqref{app:eq:ising_ham} of $n=8$ qubits, and $h$ are picked equidistantly from $[0.05, 2]$.
                Different colors indicate different numbers of measured qubits $m$ in \eqref{eq:obs-par}, as well as the ansatz applied.
                The solid green line stands for the right-hand of the bound \eqref{eq:var-qcrb}.
                Shaded areas show the standard deviation.
            }
            \label{fig:ising}
        \end{figure*}

    \subsection{Schwinger Hamiltonian}
        \label{app:schwinger-numerics}

        Now we consider a task similar to the one solved in \cite{nagano2023quantum}, where the authors studied the Schwinger Hamiltonian \cite{kokail2019self}:
        \begin{equation}
            \label{eq:schwinger_ham}
            H_{\mathrm{Schw}}(\mu) = w \sum_{j=1}^{n-1} \left( X_j X_{j+1} + Y_j Y_{j+1} \right) + \frac{\mu}{2}\sum_{j=1}^n (-1)^j Z_j + g\sum_{j=1}^n \left( \epsilon_0 - \frac12 \sum_{l=1}^j \Big( Z_l + (-1)^j \Id \Big) \right).
        \end{equation}
        This Hamiltonian is a quantum electrodynamics model describing interacting fermions in electric field.
        The first term describes the creation/annihilation of an electron/positron pair with coupling $w$, the second is the mass term with the bare mass parameter $\mu$, and the third term is the electric field energy with coupling $g$ and background electric field $\epsilon_0$.
        Setting $w=g=1$ and $\epsilon_0=0$, the Hamiltonian $H_{\mathrm{Schw}}(\mu)$ is known to have a critical point at $\mu = \mu_c \approx -0.7$ \cite{kokail2019self}.
        In \cite{nagano2023quantum}, the authors applied a QCNN described in Appendix~\ref{app:qcnn} for telling wether the ground state $\ket{\psi_\mu}$ of $H_{\mathrm{Schw}}(\mu)$ has $\mu < \mu_c$ or $\mu > \mu_c$, i.e., solving a classification problem.

        In our work, we apply the same QCNN as in \cite{nagano2023quantum} (see Fig.~\ref{fig:qcnn-ising-schwinger}) for solving a regression task as the one considered in the main text: 
        We want to find an observable $M$ which predicts the bare mass $\mu$ with a small bias $b_\mu$, i.e., $\langle M \rangle_{\psi_\mu} = \mu + b_\mu$.
        The observable $M$ is again found by numerically solving \eqref{eq:ls_min}.
        When using a QCNN as the ansatz $U_{\boldsymbol{\theta}}$ in \eqref{eq:obs-par}, we allow to measure $m \in \{1, 2\}$ qubits.
        In addition, we consider the application of HEA described in Appendix~\ref{app:hea} with $l=7$ layers and measuring $m \in \{2, 4\}$ qubits.

        The results for both cases are shown in Fig.~\ref{fig:schwinger}. 
        As we see, by measuring an additional qubit in QCNN, we lower both the prediction error and the variance.
        When measuring $m=2$ qubits, HEA performs better.
        With $m=4$, the results obtained with this ansatz improve even more.
        However, with each measured qubit we double the number of parameters to vary, see \eqref{eq:obs-par}.

        \begin{figure*}
            \centering
            \includegraphics[width=0.475\linewidth]{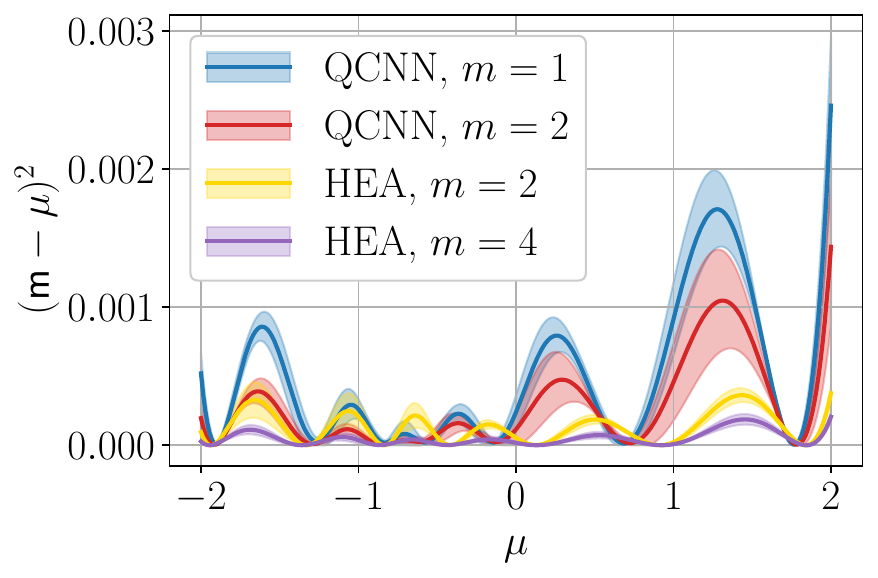}
            \includegraphics[width=0.475\linewidth]{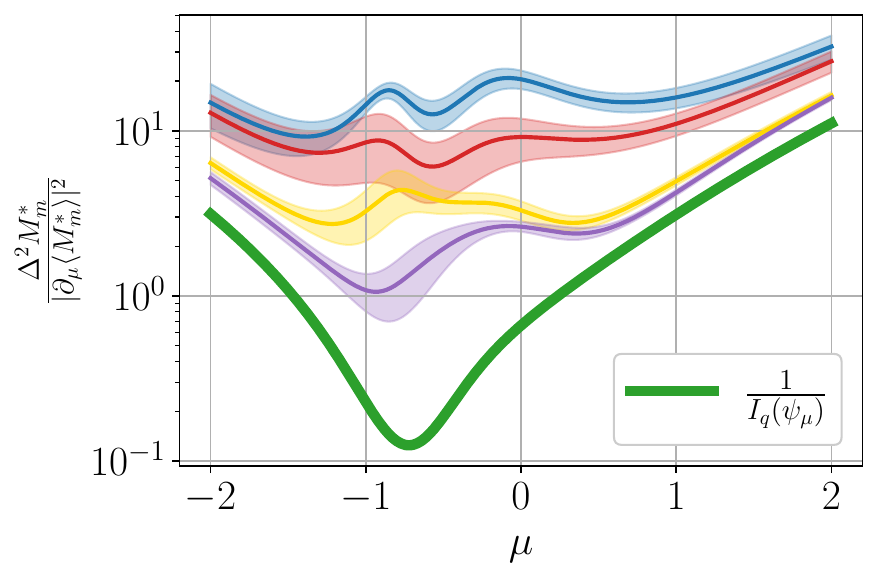}
            \caption{
                Squared difference between the predicted $\mathsf{m}=\langle M_m^*\rangle_{\psi_\mu}$ and the true bare mass $\mu$
                (left) and the variance of the optimized observable (right) obtained via numerically solving \eqref{eq:ls_min}.
                The training set is $\mathcal{T}=\big\{\big(|\psi_{\mu_i}\rangle, \mu_i\big)\big\}_{i=1}^{10}$ with $\ket{\psi_{\mu_i}}$ being the ground state of \eqref{eq:schwinger_ham} of $n=8$ qubits, and $\mu$ are picked equidistantly in $[-2, 1]$.
                Different colors indicate different numbers of measured qubits $m$ in \eqref{eq:obs-par}, as well as the ansatz applied.
                The solid green line stands for the right-hand side of the bound \eqref{eq:var-qcrb}.
                Shaded areas show the standard deviation.
            }
            \label{fig:schwinger}
        \end{figure*}

    \subsection{Reparametrized cluster Hamiltonian}
    \label{app:cluster-numerics}

        Finally, consider the Hamiltonian of the following form \cite{umeano2023can}:
        \begin{equation*}
            H_{\mathrm{cluster}} = -J \sum_{i=1}^n Z_i X_{i+1} Z_{i+2} - h_x \sum_{i=1}^n X_i - h_z \sum_{i=1}^n Z_i.
        \end{equation*}
        The first term here is the cluster Hamiltonian, the second describes the transverse field, and the third is introduced to remove the degeneracy of the ground state at $h_x = 0$. 
        This Hamiltonian can be reparametrized as
        \begin{equation}
            \label{eq:cluster-ham}
            H_{\mathrm{cluster}}(x) = -\cos\left(\frac{\pi x}{2} \right) \sum_{i=1}^n Z_i X_{i+1} Z_{i+2} -\sin\left(\frac{\pi x}{2} \right) \sum_{i=1}^n X_i - \varepsilon\sum_{i=1}^n Z_i,
        \end{equation}
        where 
        \begin{equation*}
            x = \frac{2}{\pi} \arcsin\left( \frac{h_x}{\sqrt{J^2 +h_x^2}} \right), \qquad \epsilon = \frac{h_z}{\sqrt{J^2 +h_x^2}}.
        \end{equation*}
        Considering $x \in [0, 1]$ and keeping $\varepsilon$ small, in \cite{umeano2023can}, the authors applied a QCNN for classifying the ground states $\ket{\psi_x}$ of $H_{\mathrm{cluster}}(x)$ into two classes: when $x <0.5$ and $x>0.5$.

        Setting $\varepsilon = 10^{-2}$, we solve a regression problem for predicting $x$ given a ground state $\ket{\psi_x}$.
        We used the QCNN shown in Fig.~\ref{fig:qcnn-cluster} and allowed measuring $m=1$ and $m=3$ qubits.
        As we see in Fig.~\ref{fig:cluster}, measuring $m=3$ qubits is again gives smaller prediction error and lower bias.
        For comparison, we also show the results obtained using the Hamiltonian variational ansatz (HVA) defined in \eqref{app:eq:hva} with $l=10$ layers and $m=3$ measured qubits.
        In our task, HVA also showed better results compared to QCNN.

        \begin{figure*}
            \centering
            \includegraphics[width=0.475\linewidth]{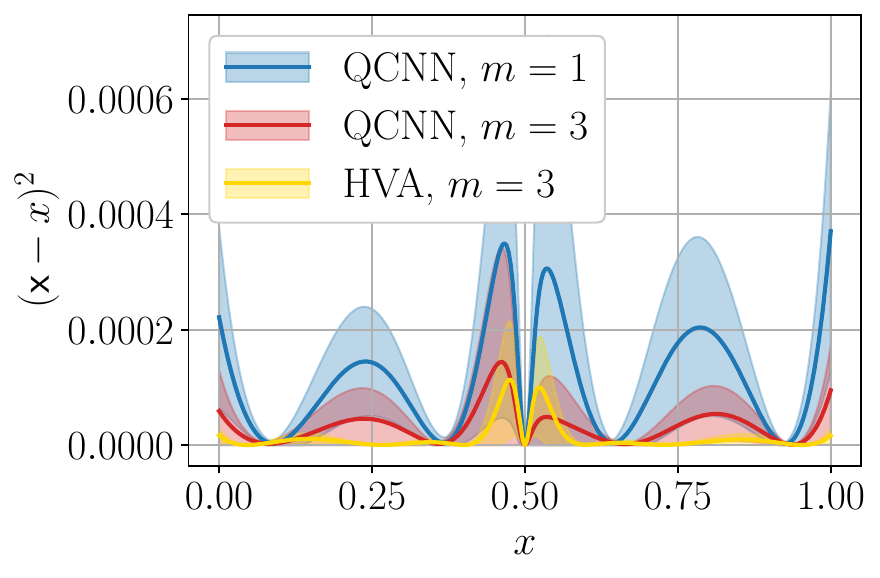}
            \includegraphics[width=0.475\linewidth]{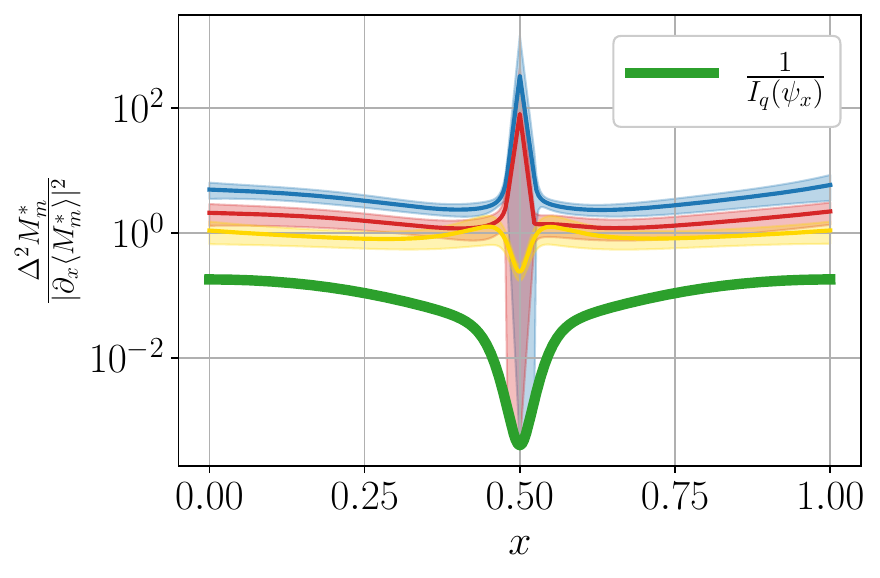}
            \caption{
                Squared difference between the predicted $\mathsf{x}=\langle M_m^*\rangle_{\psi_x}$ and the true parameter $x$ of the cluster Hamiltonian
                (left) and the variance of the optimized observable (right) obtained via numerically solving \eqref{eq:ls_min}.
                The training set is $\mathcal{T}=\big\{\big(|\psi_{x_i}\rangle, x_i\big)\big\}_{i=1}^{10}$ with $\ket{\psi_{x_i}}$ being the ground state of \eqref{eq:schwinger_ham} of $n=8$ qubits, and $x$ are picked equidistantly in $[0, 1]$.
                Different colors indicate different numbers of measured qubits $m$ in \eqref{eq:obs-par}, as well as the ansatz applied.
                The solid green line stands for the right-hand side of the bound \eqref{eq:var-qcrb}.
                Shaded areas show the standard deviation.
            }
            \label{fig:cluster}
        \end{figure*}

\bibliography{bibliography}
\bibliographystyle{unsrt}

\end{document}